\newtheorem{theorem}{Theorem}[section]
\newtheorem{proposition}{Proposition}[section]
\newtheorem{corollary}{Corollary}[section] 
\newtheorem{lemma}{Lemma}[section]
\newtheorem{definition}{Definition}[section]
\newtheorem{assumption}{Assumption}[section]
\theoremstyle{definition}
\newtheorem{remark}{Remark}[section] 
\newtheorem{example}{Example}[section]
\numberwithin{equation}{section}
\newcommand{\envspace}{\vspace{2mm}}
\newcommand{\mm}{market maker} 
\newcommand{\Real}{\mathcal{R}}
\newcommand{\PP}{\mathbb{P}} 
\newcommand{\Ptilde}{\widetilde{\PP}} 
\newcommand{\Stilde}{\widetilde{S}}
\newcommand{\Mtilde}{\widetilde{M}} 
\newcommand{\Btilde}{\widetilde{B}} 
\newcommand{\FF}{\mathcal{F}}
\newcommand{\HH}{\mathcal{H}}
\newcommand{\EE}{\mathbb{E}} 
\newcommand{\Etilde}{\widetilde{\EE}}
\newcommand{\covtilde}{\widetilde{\operatorname{Cov}}}
\newcommand{\half}{\frac{1}{2}} 
\newcommand{\ind}{{\mathbf 1}}
\newcommand{\Dm}{\mathbf{D}}
\newcommand{\variance}[1]{\langle #1 \rangle}
\newcommand{\ds}{\displaystyle}
\newcommand{\sigt}{\widetilde{\sigma}}
\def\keywordname{{\bf Key words:}} 
\newcommand{\keywords}[1]{\par\addvspace\baselineskip\noindent\keywordname\enspace\ignorespaces #1}
\title{Hedging in an equilibrium-based model for a large investor. \thanks{The author is grateful to Dmitry Kramkov for helpful discussions and insightful comments.}}
\author{David German \\ \vspace{-2mm} {\small Claremont McKenna College} 
  \\ \vspace{-2mm}{\small Department of Mathematical Sciences} 
  \\ \vspace{-2mm}{\small 850 Columbia ave} 
  \\ \vspace{-2mm}{\small Claremont, CA 91711, USA} 
  \\{\small \texttt{dgerman@cmc.edu}} 
}
\date{\today}
\begin{document}

\maketitle
  
\begin{abstract}
  We study a financial model with a non-trivial price impact effect. In this model we consider the interaction of a large investor trading in an illiquid security, and a market maker who is quoting prices for this security. We assume that the market maker quotes the prices such that by taking the other side of the investor's demand, the market maker will arrive at maturity with maximal expected wealth. Within this model we concentrate on the issue of contingent claims hedging.
\end{abstract}

\keywords{large investor, liquidity, utility optimization, equilibrium}

\newpage
\section{Introduction}
\label{sec:intro}

The question of valuation of contingent claims for a small economic agent is
well studied in various settings. In the case of complete markets the price of
a contingent claim is the initial capital of the replication strategy (a
unique arbitrage-free price). For incomplete markets exact replication is
rarely possible. In this case the utility-based valuation approach described in
the previous section is often used.  The basic economic assumption (imposed
either implicitly or explicitly) behind the general incomplete model in
Mathematical Finance is:
\begin{center}
  ``The agent can trade any security in the \emph{desired} quantity at the
  \emph{same} price''.
\end{center}
The interpretation of this assumption is that the actions of the agent do not
affect prices of securities and that there is no shortage of any security in
any quantity.

One way to relax this assumption is to introduce the notion of liquidity into
the model.  Liquidity is a complex concept standing for \emph{the ease of
  trading a security}. (Il)liquidity can have different sources, such as
inventory risk -- \cite{Stoll:78}, transaction costs -- \cite{CvitKar:95}, uncertain
holding horizons -- \cite{Huang:03}, asymmetry of information -- \cite{GarPed:04},
demand pressure -- \cite{GarPedPot:06}, search friction -- \cite{DufGarPed:05},
stochastic supply curve -- \cite{CetinJarrowProtter:04} and demand for immediacy -- 
\cite{GrossMiller:88}, among many others (see \cite{AmihudMendPed:05} for a
thorough literature overview).

In this paper we will relax the small economic agent assumption by
considering a model where agent's actions move prices. In other words, we
shall study a financial model with a non-trivial \emph{price impact} effect.
A practical example of such a market is provided, for instance, by an
over-the-counter market for an illiquid security, where a market maker quotes
prices on demand. In practice it turns out that the price quoted depends on
the transaction size. To distinguish our case from the classical one we shall
refer to the economic agent trading on such a market as a ``large'' investor.

We will consider the interaction of a large investor trading in an illiquid
security, and a market maker who is quoting prices for this security. We will
assume that the market maker quotes the prices such that by taking the other
side of the investor's demand, she will arrive at maturity with maximal
expected wealth.  This idea was used in a recent
paper by \cite{GarPedPot:06} for the
discrete time case, but only when the utility function of the market maker is of an
exponential form. Using equilibrium-based arguments the authors of that
paper considered the question of the evaluation of contingent claims. However,
they did not study the question of hedging.

The novelty of our study is that we look at the problem of replication of
contingent claims in the model with price impact. Moreover, many of our
results are derived in the continuous time framework and with utility
functions of rather general form. We will show the existence of a unique pricing rule
for a broad class of derivative securities and utility functions, as well as
the existence of a unique trading strategy that leads to a perfect
replication.

Let us point out that our approach to the model of a large investor follows the
traditional framework of Economic Theory. We begin with economic primitives
(such as agent's preferences and market equilibrium) and {\it then} derive the
model. This is different from several papers in Mathematical Finance where the
nature of illiquidity is postulated {\it a priori}, see for example,
\cite{CvitMa:96}, \cite{CetinJarrowProtter:04}, \cite{BankBaum:04} and
\cite{Frey:98}.

The idea that the price is determined by the zero net supply condition on the
market with multiple agents that are solving their individual optimization
problems (maximizing their terminal utility) is not new and was studied in the
classical paper by Karatzas, Lehoczky and Shreve in
\cite{KarLehShreve:90}. Unlike \cite{KarLehShreve:90} where multiple agents
(\emph{small investors/liquid market}) with different utility functions are
considered, we consider only one representative agent (\emph{large
  investor/illiquid market}). This allows us to avoid complicated fixed point
arguments used in \cite{KarLehShreve:90}.

This paper is organized as follows. Section \ref{defs} defines the
basic concepts. Section \ref{existence_and_uniqueness_rep_strat}
defines the replication strategy that is suitable for our model and
discusses existence and uniqueness of such a strategy. Within this
section the assumption of completeness of the market with respect to
the price process is playing an important role. Section
\ref{sec:compl-with-resp} is devoted to the study of market
completeness. This section also contains some particular
examples. For instance, here we prove that in the framework of the
Bachelier model and under the assumption of an exponential utility for
the market maker we can replicate any convex (in appropriate sense) European-type contingent claim
(e.g. convex combinations of long calls).

%%%%%%%%%%%%%%%%%%%%%%%%%%%%%%%%%%%%%%%%%%%%%%%%%%%
\section{Large investor market model}\label{defs}

We assume that the uncertainty and the flow of information are modeled by a
filtered probability space $(\Omega,\FF, (\FF_t)_{0\le t\le T}, \PP)$, where
the filtration $\FF$ is generated by a $J$-dimensional Brownian Motion $B$,
that is,
\begin{equation}
  \label{eq:5}
  \FF_t = \FF_t^B, \quad 0\leq t\leq T.  
\end{equation}
Here $T$ is a finite time horizon, and $\FF=\FF_T$.

The security market consists of $J$ risky assets and a riskless asset. These
assets are traded between the investor and the market maker. We work in
discounted terms and (without loss of generality) assume that the return on
the riskless asset is zero. We denote by $\FF_T$-measurable random variables
$f=(f^j)_{1\le j\le J}$ the payoffs of the risky assets at maturity and by
$S^H=(S^H_t)_{0\le t\le T}$ the ($J$-dimensional) price process of the risky
assets under the condition that the investor is using the ($J$-dimensional)
trading strategy or \emph{demand process} $H = (H_t)_{0\le t\le T}$. Of
course, at maturity the price does not depend on the strategy:
\begin{displaymath}
  S^H_T = f, \text{ for all } H. 
\end{displaymath}

From here on we will implicitly understand that we have $J$-dimensional
processes, and without loss of generality we will use one-dimensional
notation. However, in Section \ref{sec:compl-with-resp} we will explicitly
point out the vectors and the matrices that appear in the proofs.

The market maker can be viewed then as a {\it liquidity provider}. She takes
the other side of the investor's demand, which can be positive, as well as
negative. We assume that the market maker always responds to the investor's
demand, that is the market maker always quotes the price (which turns out to
be a function of the trade size). The reason for this assumption is that the
market maker is naturally forced to quote the prices to achieve the
equilibrium by meeting the investor's demand. Of course by equilibrium we mean
that both parties are ``happy'' with the current prices, have no desire to act
to change these prices, and the supply is equal to the demand. In order to
describe ``happiness'' of the market maker we use the standard apparatus of
utility functions. We assume that the market maker has a utility function
$U:\Real\rightarrow\Real$, which is strictly increasing, strictly concave,
continuously differentiable, and satisfies the Inada conditions
\begin{align*}
  U'(-\infty)&=\lim_{x\rightarrow-\infty }U'(x)=\infty, \\
  U'(\infty)&=\lim_{x\rightarrow\infty}U'(x)=0.
\end{align*}

We shall also require the following two technical assumptions.
\begin{assumption}\label{as:exp_moments}
  The terminal value of the traded asset $\left.f\!=\!(f^j)_{1\le j\le J}\!\in\FF_T\right.$, and
  the terminal value of the contingent claim $g\in\FF_T$ have finite
  exponential moments, that is
  \begin{align*}
    \EE[\exp(\langle q,f\rangle)]<\infty,\quad \EE[\exp(rg)]<\infty, \quad
    q\in\Real^J, \quad r\in\Real.
  \end{align*}
\end{assumption}

\begin{assumption}\label{as:exp_utility}
  Utility function $U:\Real\longrightarrow\Real$ satisfies
  \begin{align}\label{eq:exp_utility}
    c_1 < -\frac{U'(x)}{U''(x)} < c_2 \text{ for some } c_1,c_2>0.
  \end{align}
\end{assumption}

\noindent Clearly, a linear combination of exponential functions of the form
\begin{align*}
  U(x)=\sum_{i=1}^N-c_i\frac{e^{-\gamma_ix}}{\gamma_i}, \quad \gamma_i,c_i>0,\
  x\in\Real
\end{align*}
satisfies the assumption above.

Notice that Assumption \ref{as:exp_utility} implies the Inada conditions.

We assume that the investor reveals his market orders (his demand process) $H$
to the market maker. The market maker responds to the investor's demand by
quoting the price, and by taking the other side of the demand. That is, if $H$
is the investor's strategy, then $-H$ is the market maker's strategy. In other
words, the market maker responds to the demand so that the market rests in
equilibrium (supply equals demand). The market maker is quoting the price in
such a way that she arrives at maturity with maximal expected wealth. Formally
this can be stated as 

\begin{definition}
  \label{def:price_proc}
  Let $x\in\Real$ be the initial cash endowment of the \mm. Let $f=(f^j)_{1\le
    j\le J}$ be an $\FF_T$-measurable contingent claim. Let $\left.H=(H^j)_{1\le
    j\le J}\right.$ be a predictable process. The equivalent probability measure
  $\PP^H\sim\PP$ is called \emph{the pricing measure of $f$ under demand $H$},
  and the semimartingale $S^H$ is called \emph{the price process of $f$
    under demand $H$} if
  \begin{align}
    \label{eq:def:density}
    \frac{d\PP^H}{d\PP}\triangleq \frac{U'(x-\int_0^TH_udS^H_u)}{\EE[U'(x -
      \int_0^TH_udS^H_u)]},
  \end{align}
  and the price process $S^H$ with the integral $\int HdS^H$ are martingales
  under $\mathbb{P}^H$. In particular,
  \begin{align*}
    S^H_t\triangleq\EE^H[f|\FF_t], \quad 0\le t\le T.
  \end{align*}
\end{definition}

Notice that the Definition \ref{def:price_proc} is rather general, as it does
not specify any conditions on the utility function $U$, the demand process
$H$, and the contingent claim $f$.

The above definition displays an intimate relationship between the price
process and the pricing measure. It may not be clear from the formulation of
Definition \ref{def:price_proc} that it reflects the mechanics of the market
described in the previous paragraph. However, notice that the density of
$\PP^H$ is chosen in such a way that the process $-H$ is indeed a solution to
the \mm's optimization problem (which will be defined below.) Naturally, the
semimartingale $S^H$ is defined in such a way that it is a martingale
under the pricing measure. It will become evident from the following lemma,
that the numerator of \eqref{eq:def:density} is nothing else but the \mm's
marginal utility.

\begin{lemma}\label{lemma:equivalent_def}
  Let $x\in\Real$ be the initial cash endowment of the \mm. Suppose
  $f$ satisfies Assumption \ref{as:exp_moments}, and $U$ satisfies
  Assumption \ref{as:exp_utility}. Let $H=(H^j)_{1\le j\le J}$ be a
  predictable process. Suppose that $S^H$ is the price process of $f$
  under demand $H$. Then $-H$ is the unique solution of the
  optimization problem
  \begin{align}\label{eq:u(x)}
    u(x)\triangleq\max_{G\in\HH(S^H, \PP^H)}\EE[U(x+\int_0^TG_udS^H_u)],
  \end{align}
  where $\HH(S^H,\PP^H)$ is the collection of predictable processes $G$ such
  that $$\int_0^TG_udS^H_u$$ is a $\PP^H$-martingale.
\end{lemma}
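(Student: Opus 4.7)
The plan is to recognize that the density in \eqref{eq:def:density} is precisely the first-order optimality condition for the maker's problem, so the lemma reduces to the standard concavity argument for verifying optimality once a candidate marginal utility is known.

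First I would check that $-H$ is admissible for the optimization \eqref{eq:u(x)}: by the defining property of $S^H$ in Definition \ref{def:price_proc}, both $S^H$ and $\int H\,dS^H$ are martingales under $\PP^H$, hence $\int (-H)\,dS^H$ is a $\PP^H$-martingale and $-H\in\HH(S^H,\PP^H)$. Next, for an arbitrary competitor $G\in\HH(S^H,\PP^H)$, I would apply the tangent inequality for the strictly concave function $U$,
\begin{equation*}
U(a) - U(b) \le U'(b)(a-b),
\end{equation*}
with $a = x+\int_0^T G_u\,dS^H_u$ and $b = x-\int_0^T H_u\,dS^H_u$, then take $\PP$-expectation to obtain
\begin{equation*}
\EE\!\left[U\!\left(x+\int_0^T G_u dS^H_u\right)\right] - \EE\!\left[U\!\left(x-\int_0^T H_u dS^H_u\right)\right] \le \EE\!\left[U'(b)\int_0^T (G_u+H_u)\,dS^H_u\right].
\end{equation*}

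The crucial step is to rewrite the right-hand side using the density \eqref{eq:def:density}. Setting $c=\EE[U'(b)]$, the integrand $U'(b) = c\cdot d\PP^H/d\PP$, so the right side equals $c\,\EE^H\!\left[\int_0^T (G_u+H_u)\,dS^H_u\right]$. Since both $\int G\,dS^H$ and $\int H\,dS^H$ are $\PP^H$-martingales starting at $0$, this expectation vanishes, which shows $-H$ attains the supremum. For uniqueness, strict concavity gives equality in the tangent inequality only when $a=b$ a.s.; hence any optimizer $G$ must satisfy $\int_0^T(G_u+H_u)\,dS^H_u=0$, and combined with the $\PP^H$-martingale property this forces the integral to vanish at every time $t\le T$, so $G$ and $-H$ are indistinguishable as strategies integrating $S^H$.

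The main obstacle is technical rather than conceptual: one must justify the use of the tangent inequality and the change of measure by verifying the appropriate integrability. Assumption \ref{as:exp_utility} controls $U$ and $U'$ by at most exponential growth through the bound on $-U'/U''$, while Assumption \ref{as:exp_moments} supplies the exponential moments of $f$, which together with the admissibility requirement $G,H\in\HH(S^H,\PP^H)$ should yield integrability of $U(x+\int G\,dS^H)$ and of $U'(b)\cdot\int(G+H)\,dS^H$ under $\PP$. Once these integrability checks are in place, the concavity argument sketched above closes the proof, and uniqueness is automatic from strict concavity.
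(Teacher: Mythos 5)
Your proof is correct, but it runs through the primal concavity route where the paper runs through convex duality. The paper introduces the budget set $\mathcal{A}(x)=\{h:\EE^H[h]\le x\}$ and the Legendre transform $V(y)=\sup_z\{U(z)-zy\}$, bounds $\EE[U(h)]$ for every $h\in\mathcal{A}(x)$ by $\EE[V(y\,d\PP^H/d\PP)]+xy$, and then checks that this bound is attained at $\widehat h=x-\int_0^T H_u\,dS^H_u$ via the identity $U(I(y))=V(y)+yI(y)$ and the fact that $I\bigl(y\,d\PP^H/d\PP\bigr)=\widehat h$ for $y=\EE[U'(\widehat h)]$. You instead apply the gradient inequality $U(a)-U(b)\le U'(b)(a-b)$ directly and kill the right-hand side by observing that $U'(b)$ is a constant multiple of the density, so the correction term is $c\,\EE^H\bigl[\int_0^T(G_u+H_u)\,dS^H_u\bigr]=0$ by the martingale properties in Definition \ref{def:price_proc}. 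The two arguments are dual to one another (taking $y=U'(b)$ in the Fenchel inequality recovers your tangent inequality), so the mathematical content is the same; what each buys is slightly different. Your version is more streamlined and stays entirely within the class $\HH(S^H,\PP^H)$ that the lemma actually quantifies over, and it handles uniqueness more explicitly than the paper does (the paper asserts uniqueness in the statement but its displayed argument only establishes optimality, whereas your strict-concavity step correctly forces $\int_0^T(G_u+H_u)\,dS^H_u=0$ a.s.\ and hence, by the martingale property, at all times, giving uniqueness in the $d\PP\times d\variance{S^H}$ sense of Remark \ref{remark:unique}). The paper's version proves the formally stronger statement that $\widehat h$ is optimal over the whole budget set $\mathcal{A}(x)$, not merely over terminal values of admissible integrals, which is the form typically reused in duality arguments. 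Both proofs share the same unaddressed technicality, namely that $\EE[U(b)]$ (respectively $\EE[V(y\,d\PP^H/d\PP)]$) must be finite for the comparison to be meaningful; your closing remarks on integrability via Assumptions \ref{as:exp_moments} and \ref{as:exp_utility} are the right place to discharge this.
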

\begin{proof}
  Let
  \begin{align*}
    \mathcal{A}(x)\triangleq\{h: \ \EE^H[h]\le x\}.
  \end{align*}
  In order to prove that $-H$ is the unique solution of the optimization
  problem \eqref{eq:u(x)}, we need to show that
  \begin{align*}
    \widehat{h} \triangleq x-\int_0^TH_udS^H_u
  \end{align*}
  is an element of $\mathcal{A}(x)$, and for any $h\in\mathcal{A}(x)$, the
  following inequality holds true
  \begin{align}\label{eq:max_ineq}
    \EE[U(h)]\le\EE[U(\widehat{h})].
  \end{align}
  The fact that $\widehat{h}$ is an element of $\mathcal{A}(x)$ follows from
  the martingale property of $\int HdS^H$ under $\mathbb{P}^H$, see Definition
  \ref{def:price_proc}.

  Further, let $V(y)$ be the Legendre transform of $U(z)$, i.e.
  \begin{align}\label{eq:legendre}
    V(y)\triangleq\sup_{z\in\Real}\{U(z)-zy\},\quad y>0
  \end{align}
  It follows from \eqref{eq:legendre} that for any $y\ge 0$ and $z\in\Real$
  \begin{align*}
    U(z)\le V(y)+zy,
  \end{align*}
  and therefore
  \begin{align}\label{eq:arb_h}
    \EE[U(h)] & \le \EE\left[V\left(y\frac{d\PP^H}{d\PP}\right)\right] +
    \EE\left[hy\frac{d\PP^H}{d\PP}\right]
    = \EE\left[V\left(y\frac{d\PP^H}{d\PP}\right)\right] + \EE^H[h]y\nonumber \\
    &\le \EE\left[V\left(y\frac{d\PP^H}{d\PP}\right)\right] + xy,
  \end{align}
  where the last inequality follows by virtue of $h$ being an element of
  $\mathcal{A}(x)$.  On the other hand, the identity
  \begin{align*}
    U(I(y))=V(y)+yI(y), \quad \hbox{ where } I(y)=(U')^{-1}(y),\ \ y>0
  \end{align*}
  along with 
  \begin{align*}
    I\left(y\frac{d\PP^H}{d\PP}\right) =
    (U')^{-1}\left(y\frac{d\PP^H}{d\PP}\right) = \widehat{h},\ \
    y=\EE[U'(\widehat{h})]
  \end{align*}
  implies that
  \begin{align}\label{eq:h_hat}
    \EE[U(\widehat{h})]& =\EE\left[V\left(y\frac{d\PP^H}{d\PP}\right)\right] +
    \EE\left[\widehat{h}y\frac{d\PP^H}{d\PP}\right] =
    \EE\left[V\left(y\frac{d\PP^H}{d\PP}\right)\right]
    + \EE^H[\widehat{h}]y\nonumber\\
    &= \EE\left[V\left(y\frac{d\PP^H}{d\PP}\right)\right] + xy.
  \end{align}
  Now we compare \eqref{eq:arb_h} with \eqref{eq:h_hat} and conclude that
  \eqref{eq:max_ineq} holds true, and therefore
  \begin{align*}
    u(x)=\EE[U(\widehat{h})]=\EE[U(x-\int_0^TH_udS^H_u)].
  \end{align*}
\end{proof}

In what follows we are interested to find the answers to the following
question: Is it possible for a large trader to replicate another {\it non-traded}
  contingent claim $g$, that is, form a demand $H$ such that for some initial
  wealth $p$
  \begin{displaymath}
    p+ \int_0^T H_u dS_u^H = g? 
  \end{displaymath}

There are another two important questions to ask:
\begin{itemize}
  \item Does the price process $S^H$ exist for an arbitrary demand $H$?
  \item Provided that $S^H$ exists, is it unique?
\end{itemize}
The answers to the two latter questions are given in the companion paper by \cite{German:10b}, while in this paper we will be concerned with the former question.

%%%%%%%%%%%%%%%%%%%%%%%%%%%%%%%%%%%%%%%%%%%%%%%%%%%
\section{Replication}
\label{existence_and_uniqueness_rep_strat}

Consider an $\FF_T$-measurable random variable $g$ (alternatively we
will call it a \emph{non-traded} European-type contingent claim). The
utmost important question is what is the ``fair'' price of this
claim. We remind the reader, that in the framework of complete
financial model for a small economic agent the arbitrage-free price of
$g$ is given by
\begin{displaymath}
  p=\EE_{\PP^*}[g]
\end{displaymath}
and the unique hedging strategy can be found from the martingale
representation
\begin{displaymath}
  g=p+\int_0^TH_tdS_t, 
\end{displaymath}
when $S$ is a martingale under the unique martingale measure $\PP^*$.

The classical theory of asset pricing hinges on the crucial assumption that
the price per share of an asset does not depend on the size of the trade at
any moment in time. Moreover, when pricing by replication, it is understood
that the integrand and the integrator of the wealth process are not functions
of each other. In our large trader model the price process {\it is} a
(non-linear) function of demand. Therefore the problem of replication in the
illiquid market cannot be solved using the tools of the classical asset
pricing theory. In order to construct a perfect hedge for a non-traded contingent
claim it has to be taken into account that there is a back-and-forth
relationship between the size of the trade and the current price of the traded
asset. More precisely, we have the following definition.

\begin{definition}\label{def:hedging_strategy}
  Let $g,f=(f^j)_{1\le j\le J}$ be $\FF_T$-measurable random variables. A
  predictable process $H$ is called \emph{a hedging strategy of $g$}, if there
  exist $p\in\Real$, and a price process $S^H$ of $f$ under demand $H$ such
  that
  \begin{align*}
    g=p+\int_0^TH_udS^H_u.
  \end{align*}
\end{definition}

\begin{remark}
  Similarly to pricing by replication in the classical framework of a small
  economic agent, we will call $p$ a \emph{price of $g$}, since $p$ is the
  initial capital required for the perfect replication of~$g$. Note that it is
  not clear \emph{a priori} that $p$ is defined uniquely. We shall show below
  that the uniqueness always holds true for exponential utilities.
\end{remark}

\begin{remark}
  The above definition looks similar to the classical definition of a hedging
  strategy, with the crucial difference that the price process and the hedging
  strategy depend on each other.
\end{remark}

\begin{theorem}[Necessary condition]\label{thm:hedging_necessary_cond}
  Let $x\in\Real$ be the initial capital of the \mm. Assume that $f$
  and $g$ satisfy Assumption \ref{as:exp_moments}, and $U$ satisfies
  Assumption \ref{as:exp_utility}. Suppose there exists a hedging
  strategy $H$ of the contingent claim $g$ with price $p$. Then the
  unique pricing measure $\PP^H$ is given by the density
  \begin{equation}\label{eq:def_g_P}
    \frac{d\PP^H}{d\PP}=\frac{U'(x+p-g)}{\EE[U'(x+p-g)]},
  \end{equation}
  the price process is unique and is given by
  \begin{equation*}
    S^H_t = \EE^H[f|\FF_t].
  \end{equation*}
  Moreover, the following integral representation holds true
  \begin{equation}\label{eq:g_H_integr}
    \EE^H[g|\FF_t]=p+\int_0^tH_udS^H_u, \hbox{ for any } t\in[0,T].
  \end{equation}
\end{theorem}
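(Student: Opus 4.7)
The plan is to substitute the hedging identity into the defining density of Definition \ref{def:price_proc} and then read everything else off from the $\PP^H$-martingale structure. Since $g = p + \int_0^T H_u\,dS^H_u$ by assumption, we have $x - \int_0^T H_u\,dS^H_u = x+p-g$; plugging this into \eqref{eq:def:density} gives \eqref{eq:def_g_P} immediately. The identification $S^H_t = \EE^H[f|\FF_t]$ is then the martingale clause of Definition \ref{def:price_proc} applied to $S^H$ with terminal value $f$, and the integral representation \eqref{eq:g_H_integr} follows by taking $\PP^H$-conditional expectation in the hedging identity at time $t$ and using that $\int H\,dS^H$ is a $\PP^H$-martingale.

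For the substitution to yield a genuine equivalent probability measure I need to verify that $0 < \EE[U'(x+p-g)] < \infty$; this is the only point where the technical hypotheses enter. Assumption \ref{as:exp_utility} together with $U'>0$, $U''<0$ forces $(\log U')' = U''/U'$ to take values in $(-1/c_1,-1/c_2)$, so integrating sandwiches $U'$ between two positive exponentials on all of $\Real$. Consequently $U'(x+p-g)$ is bounded above by an expression of the form $C\exp(\alpha|g|+\beta)$, and its expectation is finite by the exponential moments of $g$ guaranteed by Assumption \ref{as:exp_moments}. Positivity of the expectation is automatic since $U'>0$.

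Uniqueness of both $\PP^H$ and $S^H$ is then automatic: any competing price process $\tilde S^H$ compatible with the same hedging strategy $H$ and price $p$ satisfies $\int_0^T H_u\,d\tilde S^H_u = g-p$ as well, forcing its pricing density to equal \eqref{eq:def_g_P} and hence its pricing measure to coincide with $\PP^H$; the martingale identification $\tilde S^H_t = \EE^H[f|\FF_t]$ then forces $\tilde S^H = S^H$. The only substantive step in the whole argument is the integrability verification for the density; everything else is direct bookkeeping with the definition of a pricing measure, and I anticipate no further obstacles.
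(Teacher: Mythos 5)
Your proposal is correct and follows essentially the same route as the paper: substitute the hedging identity into the density of Definition \ref{def:price_proc}, verify the density is well defined via the exponential sandwich on $U'$ from Assumption \ref{as:exp_utility} combined with the exponential moments of $g$, and obtain \eqref{eq:g_H_integr} by conditioning under $\PP^H$. Your integrability and uniqueness arguments are merely spelled out in more detail than the paper's, which states these points tersely.
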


\begin{remark}\label{remark:unique} From integral representation
  \eqref{eq:g_H_integr} we deduce that the hedging process $H$ is
  defined uniquely a.s. on $\Omega \times [0,T]$ with respect to
  $d\PP[\omega]\times d\variance{S^H}_t$ in the following sense. Due
  to the Assumption \ref{as:exp_moments}, $g$ has finite exponential
  moments and therefore it is also square-integrable. Therefore if
  there exists another hedging strategy $\widetilde H$ such that
  \begin{equation*}
    \EE^H[g|\FF_t]=p+\int_0^t\widetilde{H}_udS^{\widetilde{H}}_u,
  \end{equation*}
  then
  \begin{equation*}
    \sum_{j=1}^J\int_0^T|\widetilde{H}_t^j-H_t^j|^2d\variance{{S^H}^j}_t=0 \ \hbox{ a.s.}
  \end{equation*}
\end{remark}

\begin{proof}
  Since $H$ is a hedging strategy of the contingent claim $g$, there exists
  $S^H$, the price process of $f$ under demand $H$ along with the
  corresponding pricing measure $\PP^H$, such that
  \begin{align}\label{eq:g_necessary}
    g=p+\int_0^TH_udS^H_u.
  \end{align}
 
  Therefore by Definition \ref{def:price_proc},
  \begin{equation}
    \label{eq:g_P}
    \frac{d\PP^H}{d\PP} = \frac{U'(x-\int_0^TH_udS^H_u)}{\EE[U'(x-\int_0^TH_udS^H_u)]} = \frac{U'(x+p-g)}{\EE[U'(x+p-g)]}.
  \end{equation}

  Due to Assumption \ref{as:exp_utility}, the first derivative of the utility
  function $U$ is bounded from below and above by exponential
  functions. Therefore Assumption \ref{as:exp_moments} implies that the random
  variable \eqref{eq:g_P} (the density of the pricing measure $\PP^H$) is well
  defined. It is also unique, and so is the price process $S^H$,
  which by Definition \ref{def:price_proc} is equal to
  \begin{align*}
    S^H_t=\EE^H[f|\FF_t], \quad 0\le t\le T.
  \end{align*}
  Since $\int_0^tH_udS^H_u$ is a $\PP^H$-martingale, by applying conditional
  expectation to the both sides of \eqref{eq:g_necessary} we obtain
  \begin{align*}
    \EE^H[g|\FF_t]=p+\int_0^tH_udS^H_u, \hbox{ for any } t\in[0,T],
  \end{align*}
  along with
  \begin{align*}
    \EE^H[g]=p.
  \end{align*}
\end{proof}

We start the study of the existence of replication strategy with the following

\begin{lemma}\label{lemma:p_existence}
  Assume that $g$ satisfies Assumption \ref{as:exp_moments}, $U$ satisfies
  Assumption \ref{as:exp_utility}, and \eqref{eq:5} holds true.  Then for any
  $x\in\Real$ the equation
  \begin{equation}
    \label{eq:exist_p}
    \EE[(p-g)U'(x+p-g)] = 0    
  \end{equation}
  has a solution. If, in addition, the utility function $U$ is exponential,
  that is,
  \begin{equation*}
    U(x) = -\frac1\gamma e^{-\gamma x}
  \end{equation*}
  for some $\gamma>0$, than the solution of \eqref{eq:exist_p} is unique and
  given by
  \begin{equation*}
    \label{eq:exp_util_p}
    p = \frac{\mathbb{E}[g e^{-\gamma g}]}{\mathbb{E}[e^{-\gamma g}]}.   
  \end{equation*} 
\end{lemma}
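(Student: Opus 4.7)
The plan is to define $F(p):=\EE[(p-g)U'(x+p-g)]$ and produce a zero of $F$ via the intermediate value theorem, after first verifying $F$ is well-defined and continuous on $\RR$. The analytic input that makes everything go through is Assumption~\ref{as:exp_utility}: integrating the inequality $-1/c_1 < (\ln U')'(y)=U''(y)/U'(y) < -1/c_2$ produces the two-sided exponential envelopes
\begin{equation*}
U'(0)\min\bigl(e^{-y/c_1},e^{-y/c_2}\bigr)\;\le\; U'(y)\;\le\; U'(0)\bigl(e^{-y/c_1}+e^{-y/c_2}\bigr),\qquad y\in\RR.
\end{equation*}
Combined with the finite exponential moments of $g$ from Assumption~\ref{as:exp_moments}, these bounds yield a locally uniform integrable majorant for $(p-g)U'(x+p-g)$, so $F$ is finite and continuous in $p$ by dominated convergence.

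Next I factorise $F(p)=q(p)\bigl(p-m(p)\bigr)$ with $q(p):=\EE[U'(x+p-g)]>0$ and $m(p):=\EE[g\,U'(x+p-g)]/q(p)$, and I show that $m$ is bounded uniformly in $p$. The key observation is that the ratio $U'(x+p-g)/U'(x+p)$ admits a $p$-independent comparison: integrating $U''/U'$ on the interval between $x+p$ and $x+p-g$ yields
\begin{equation*}
\min\bigl(e^{g/c_1},e^{g/c_2}\bigr)\;\le\;\frac{U'(x+p-g)}{U'(x+p)}\;\le\;\max\bigl(e^{g/c_1},e^{g/c_2}\bigr),
\end{equation*}
the bounds depending only on $g$. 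Dividing numerator and denominator of $m(p)$ by the deterministic constant $U'(x+p)$ then gives
\begin{equation*}
|m(p)|\;\le\;\frac{\EE\bigl[\,|g|\max(e^{g/c_1},e^{g/c_2})\,\bigr]}{\EE\bigl[\,\min(e^{g/c_1},e^{g/c_2})\,\bigr]}\;=:\;M\;<\;\infty,
\end{equation*}
the denominator being strictly positive and the numerator finite by Assumption~\ref{as:exp_moments}. Since $q(p)>0$, it follows that $F(p)>0$ for $p>M$ and $F(p)<0$ for $p<-M$, and continuity delivers a zero $p^{*}\in[-M,M]$.

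For the exponential utility $U(x)=-\gamma^{-1}e^{-\gamma x}$ we have $U'(y)=e^{-\gamma y}$, so the integrand factorises explicitly as $(p-g)e^{-\gamma(x+p-g)}=e^{-\gamma(x+p)}(p-g)e^{\gamma g}$ and
\begin{equation*}
F(p)\;=\;e^{-\gamma(x+p)}\bigl(p\,\EE[e^{\gamma g}]-\EE[g\,e^{\gamma g}]\bigr),
\end{equation*}
which vanishes at exactly one value of $p$. This gives both uniqueness and the announced Esscher-type closed form.

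The step I expect to be the main obstacle is the uniform-in-$p$ bound on $m(p)$: without the two-sided control of $-U'/U''$ coming from Assumption~\ref{as:exp_utility}, the random ratio $U'(x+p-g)/U'(x+p)$ could degenerate as $p$ varies, and without that $p$-independent domination one would have to argue considerably more carefully about the limiting behaviour of $F$ at $\pm\infty$. Everything else reduces to dominated-convergence bookkeeping.
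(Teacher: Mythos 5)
Your argument is correct and follows essentially the same route as the paper: both normalize by the deterministic factor $U'(x+p)$, use Assumption \ref{as:exp_utility} to sandwich the ratio $U'(x+p-g)/U'(x+p)$ between exponentials of $g$ alone, and combine the exponential moments of $g$ with an intermediate-value argument (your uniform bound $|m(p)|\le M$ is a slightly cleaner packaging of the paper's limits $\alpha(p)\to\mp\infty$ as $p\to\mp\infty$, and you are more explicit about continuity). One small remark: your closed form $p=\EE[ge^{\gamma g}]/\EE[e^{\gamma g}]$ agrees with what the paper's own proof derives, so the exponent $e^{-\gamma g}$ in the displayed formula of the lemma statement is a sign typo there rather than an error in your computation.
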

\begin{proof}

  Let us consider the function $\alpha(p):\Real\rightarrow\Real$
  defined as
  \begin{equation*}
    \alpha(p)\triangleq\EE\left[(p-g)\frac{U'(x+p-g)}{U'(x+p)}\right].
  \end{equation*}
  We will show that $\alpha(p)$ has a zero (at least one). Since $U$ is a utility function (strictly increasing, strictly concave),
  $U'(x+p)$ is non-random and strictly positive for all $x$ and $p$. Therefore
  the existence of a solution of the equation
  \begin{equation}\label{eq:alpha-zero}
    \alpha(p)=0
  \end{equation}
  will imply the existence of a solution of \eqref{eq:exist_p}.

  In order to show the existence of a solution of
  \eqref{eq:alpha-zero} it is sufficient to show that
  \begin{align}
    &\lim_{p\rightarrow -\infty}\alpha(p)\le 0,\hbox{ and}\label{eq:alpha<0}\\
    &\lim_{p\rightarrow\infty}\alpha(p)\ge 0.\label{eq:alpha>0}
  \end{align}

  It follows directly from Assumption \ref{as:exp_utility} that for
  any $z,y\in\Real$
  \begin{equation*}
    e^{-c_2y}\le\frac{U'(z+y)}{U'(z)}\le e^{-c_1y}.
  \end{equation*}
  Therefore
  \begin{equation*}
    e^{c_2g}\le\frac{U'(x+p-g)}{U'(x+p)}\le e^{c_1g},
  \end{equation*}
  and due to Assumption \ref{as:exp_moments}, the random variable 
  $\frac{U'(x+p-g)}{U'(x+p)}$ has a finite positive expectation and is square-integrable. Hence by H\"older's inequality
  \begin{equation*}
    \EE\left[g\frac{U'(x+p-g)}{U'(x+p)}\right]<\infty,
  \end{equation*}
and therefore 
  \begin{align}
    \lim_{p\rightarrow-\infty}\alpha(p) = &\lim_{p\rightarrow-\infty} \EE\left[(p-g)\frac{U'(x+p-g)}{U'(x+p)}\right] = -\infty,\nonumber\\
    &\lim_{p\rightarrow\infty}\EE\left[p\frac{U'(x+p-g)}{U'(x+p)}\right] = \infty,\label{eq:alpha1}\\
    &\lim_{p\rightarrow\infty}\EE\left[g\frac{U'(x+p-g)}{U'(x+p)}\right]
    =C<\infty,\label{eq:alpha2}
  \end{align}
  for some constant $C\in\Real$.  As a direct consequence of
  \eqref{eq:alpha1} and \eqref{eq:alpha2} we obtain
  \begin{align*}
    \lim_{p\rightarrow\infty}\alpha(p) = \lim_{p\rightarrow\infty}\EE\left[(p-g)\frac{U'(x+p-g)}{U'(x+p)}\right]=\infty.
  \end{align*}
Therefore we conclude that \eqref{eq:alpha-zero} has a solution and consequently \eqref{eq:exist_p} has a solution.

Equation \eqref{eq:exist_p} can be written as
\begin{equation*}
  p=\frac{\EE[gU'(x+p-g)]}{\EE[U'(x+p-g)]},
\end{equation*}
and for an exponential utility function we have
\begin{equation*}
  p=\frac{\EE[g\exp\{-\gamma(x+p-g)\}]}{\EE[\exp\{-\gamma(x+p-g)\}]} = \frac{\EE[ge^{\gamma g}]}{\EE[e^{\gamma g}]},
\end{equation*}
which is the unique solution of \eqref{eq:exist_p}.
\end{proof}

\begin{remark}
  In the following theorem we will need the notion of
  \emph{completeness}. By completeness we will understand that every
  European-type derivative security (an $\FF_T$-measurable random
  variable) can be represented as a constant plus an integral with
  respect to the underlying security.
\end{remark}

  We shall now state

\begin{theorem}[Sufficient condition]\label{thm:sufficient_condition}
  Let $x\in\Real$ be the initial capital of the \mm. Assume that $f$ and $g$
  satisfy Assumption \ref{as:exp_moments}, $U$ satisfies Assumption
  \ref{as:exp_utility}, and \eqref{eq:5} holds true.  Let $\Ptilde$ be a
  probability measure such that for some $p\in\Real$
  \begin{equation}
    \frac{d\Ptilde}{d\PP}\triangleq\frac{U'(x+p-g)}{\EE[U'(x+p-g)]},
    \label{eq:ptilde_density}
  \end{equation}
  and
  \begin{equation}
    \Etilde[g] =p. \label{eq:ptilde_expect_g}
  \end{equation}
  Let us denote
  \begin{align}
    \label{eq:stilde}
    \Stilde_t = \Etilde[f|\FF_t],
  \end{align}
  and suppose that the model is \emph{complete} with respect to
  $\Stilde$. Then there exists a hedging strategy $H$ such that
  \begin{align*}
    g=p+\int_0^TH_udS^H_u,
  \end{align*}
  with $S^H=\Stilde$, the price process of $f$ under demand $H$.
\end{theorem}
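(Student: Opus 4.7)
\smallskip

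\noindent\textbf{Proof plan.} The plan is to produce the hedging strategy directly from the completeness assumption, and then verify that the candidate price process $\Stilde$ satisfies the definition of the price process $S^H$ under demand $H$ (Definition \ref{def:price_proc}). The whole argument is essentially the reverse of Theorem \ref{thm:hedging_necessary_cond}: there the integral representation forced the density to have the form \eqref{eq:def_g_P}; here the density has been chosen to have that form, and we use completeness to manufacture the representation.

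\smallskip

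\noindent\textbf{Step 1 (integrability and well-definedness of $\Ptilde$).} First I would verify that \eqref{eq:ptilde_density} defines a bona fide equivalent probability measure and that $g,f$ are integrable under it. Assumption \ref{as:exp_utility} gives the two-sided exponential bound
\begin{equation*}
e^{c_2(g-p)}\;\le\;\frac{U'(x+p-g)}{U'(x+p)}\;\le\;e^{c_1(g-p)},
\end{equation*}
and then Assumption \ref{as:exp_moments} (together with H\"older) shows that $\EE[U'(x+p-g)]\in(0,\infty)$ and that $g$ and each component of $f$ lie in $L^2(\Ptilde)$ (in fact in every $L^q(\Ptilde)$). So $\Ptilde\sim\PP$, the conditional expectation $\Stilde_t=\Etilde[f|\FF_t]$ is a true $\Ptilde$-martingale, and $\Etilde[g]$ in \eqref{eq:ptilde_expect_g} is finite.

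\smallskip

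\noindent\textbf{Step 2 (produce $H$ by completeness).} Consider the square-integrable $\Ptilde$-martingale $M_t\triangleq\Etilde[g|\FF_t]$. By the completeness assumption with respect to $\Stilde$, there exists a predictable process $H$ such that
\begin{equation*}
M_t \;=\; M_0 + \int_0^t H_u\, d\Stilde_u,\qquad 0\le t\le T,
\end{equation*}
with the stochastic integral a true $\Ptilde$-martingale. Condition \eqref{eq:ptilde_expect_g} gives $M_0=p$ and $M_T=g$, so
\begin{equation*}
g \;=\; p + \int_0^T H_u\, d\Stilde_u.
\end{equation*}

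\smallskip

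\noindent\textbf{Step 3 (verify $\Stilde=S^H$).} It remains to check that this $H$ is admissible in the sense of Definition \ref{def:price_proc}, i.e.\ that $\Ptilde=\PP^H$ and $\Stilde=S^H$. Substituting the representation from Step~2 into the right-hand side of \eqref{eq:def:density},
\begin{equation*}
x-\int_0^T H_u\, d\Stilde_u \;=\; x-(g-p) \;=\; x+p-g,
\end{equation*}
so
\begin{equation*}
\frac{U'\!\left(x-\int_0^T H_u\, d\Stilde_u\right)}{\EE\!\left[U'\!\left(x-\int_0^T H_u\, d\Stilde_u\right)\right]}
\;=\;\frac{U'(x+p-g)}{\EE[U'(x+p-g)]}\;=\;\frac{d\Ptilde}{d\PP}.
\end{equation*}
Together with the facts that $\Stilde_t=\Etilde[f|\FF_t]$ is a $\Ptilde$-martingale and $\int H\,d\Stilde$ is a $\Ptilde$-martingale (both from Steps 1--2), this exactly matches Definition \ref{def:price_proc}; hence $\Stilde=S^H$ and $\Ptilde=\PP^H$.

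\smallskip

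\noindent\textbf{Main obstacle.} The computations above are algebraic once the measure change is in hand; the substantive issue is Step 1---making sure $\Ptilde$ is a true probability measure and that the integrability of $g$ under $\Ptilde$ is strong enough for the completeness assumption to yield a \emph{true} $\Ptilde$-martingale representation (not merely a local one), so that $\int H\,d\Stilde$ qualifies in Definition \ref{def:price_proc}. This is precisely where Assumptions \ref{as:exp_moments} and \ref{as:exp_utility} are used in tandem, via the exponential two-sided bound on $U'(x+p-g)/U'(x+p)$.
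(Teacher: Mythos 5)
Your proposal is correct and follows essentially the same route as the paper's proof: use completeness to obtain the martingale representation of $\Etilde[g|\FF_t]$ with initial value $p$ (via \eqref{eq:ptilde_expect_g}), then substitute $x-\int_0^T H_u\,d\Stilde_u = x+p-g$ into \eqref{eq:def:density} to identify $\Ptilde=\PP^H$ and $\Stilde=S^H$. Your Step 1 merely spells out integrability details that the paper leaves implicit (and had already established in Lemma \ref{lemma:p_existence}).
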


\begin{remark}
  The constant $p$ allowing for the existence of pricing measure
  $\widetilde{\mathbb{P}}$ satisfying \eqref{eq:ptilde_density} and
  \eqref{eq:ptilde_expect_g} is exactly the one solving
  \eqref{eq:exist_p}. Note that, by Lemma \ref{lemma:p_existence}, $p$ is
  defined uniquely for an important class of exponential utilities.
\end{remark}

\begin{proof}
  First we observe that since the market is complete with respect to
  $\Stilde$, the probability measure $\Ptilde$ defined by
  \eqref{eq:ptilde_density} is the unique martingale measure of $\Stilde$.

  Since the market is complete with respect to $\Stilde$, any
  $\FF_T$-measurable random variable can be represented as a constant
  plus an integral with respect to $\Stilde$. Moreover, due to
  Assumption \ref{as:exp_moments}, random variable $g$ is
  square-integrable, and therefore there exists a predictable process
  $H$, such that the process $\int_0^tH_ud\Stilde_u$ is a
  $\Ptilde$-martingale, and
  \begin{align}\label{eq:g_repr}
    g=\Etilde[g]+\int_0^TH_ud\Stilde_u=p+\int_0^TH_ud\Stilde_u,
  \end{align}
  where the last equality follows from \eqref{eq:ptilde_expect_g}. Hence
  \begin{align*}
    \frac{d\Ptilde}{d\PP}=\frac{U'(x+p-g)}{\EE[U'(x+p-g)]}
    =\frac{U'(x-\int_0^TH_ud\Stilde_u)}{\EE[U'(x-\int_0^TH_ud\Stilde_u)]}.
  \end{align*}
  By Definition \ref{def:price_proc}, the above implies that $\PP^H=\Ptilde$
  is the unique pricing measure of $f$ under demand $H$, and $S^H=\Stilde$ is
  the unique price process of $f$ under demand $H$. Hence, by Definition
  \ref{def:hedging_strategy} $H$ is the hedging strategy of $g$.
\end{proof}

\begin{remark} We point out again that the integral representation
  \eqref{eq:g_repr} implies that the hedging process $H$ is defined uniquely
  a.s. on $\Omega \times [0,T]$ with respect to $d\PP[\omega]\times
  d\variance{S^H}_t$ in the sense explained in Remark \ref{remark:unique}. 
\end{remark}

The assumption on market completeness is essential for the sufficient
condition. Here we present an example of an incomplete market for which the
above theorem does not hold, and there is no hedging strategy for a particular
contingent claim $g$.

\begin{example} \label{ex:incompl} Consider the case of a one-dimensional
  Brownian Motion $B$. Let $g=B_T$, and let $U(x)=-e^{-x}$. Therefore
  \begin{align*}
    \ds\frac{d\Ptilde}{d\PP}=\frac{e^{-B_T}}{e^{\half T}}=e^{-B_T-\half T},
  \end{align*}
  and by Girsanov Theorem under the probability measure $\Ptilde$ there exists
  another Brownian Motion $\Btilde$ such that
  \begin{align*}
    \Btilde_t=B_t+t.
  \end{align*}
  Further, let
  \begin{align*}
    f=B_T-B_{\tau}+T-\tau=\int_0^T\ind_{[\tau,T]}(u)d\Btilde_u
  \end{align*}
  for some discrete time $0<\tau< T$. In this case
  \begin{align*}
    \Stilde_t&=\EE[\int_0^T\ind_{[\tau,T]}(u)d\Btilde_u|\FF_t]\\
    &=\int_0^t\ind_{[\tau,T]}(u)d\Btilde_u = \int_0^t\sigma_ud\Btilde_u, \quad
    \sigma_t=\ind_{[\tau,T]}(t),
  \end{align*}
  which implies that the the process $\Stilde$ is identically zero on the time
  interval $[0,\tau)$. There exists a martingale representation of
  \begin{align*}
    g=-B_T=-T+\int_0^Td\Btilde_u
  \end{align*}
  as an integral with respect to the Brownian Motion $\Btilde$. However, it is
  impossible to represent $g$ as an integral with respect to $\Stilde$, since
  the volatility $\sigma$ is zero on $[0,\tau]$. In other words, the model is
  incomplete with respect to $\Stilde$.
  \begin{flushright}$\square$\end{flushright}
\end{example}

\section{Completeness with respect to $\Stilde$.}
\label{sec:compl-with-resp}

As it was highlighted by Example \ref{ex:incompl}, for the existence of a
replication strategy it is necessary to verify that the market driven by the
price process $\Stilde$ is complete. This problem will be the focus of the
current section.

We start by recalling (without the proof) the following (well-known) fact lying in the
intersection of the Girsanov and the Martingale Representation theorems.

\begin{lemma}\label{lemma:compl_tilde_B}
  Consider a filtered probability space $(\Omega, \FF, (\FF_t)_{0\leq t\leq
    T}, \PP)$,

  \noindent where the filtration $(\FF_t)_{0\leq t\leq T}$ is generated by a
  $J$-dimensional Brownian motion $B$. Let $\Ptilde\sim\PP$ and let $\alpha =
  (\alpha_t)$ be a $J$-dimensional stochastic process such that
  \begin{equation}
    \label{eq:tilde_B}
    \Btilde_t \triangleq B_t + \int_0^t \alpha_u du, \quad 0\leq t\leq T, 
  \end{equation}
  is the $J$-dimensional Brownian Motion under $\Ptilde$. Then any
  $\Ptilde$-martingale $\Mtilde$ is a stochastic integral with respect
  to $\Btilde$.
\end{lemma}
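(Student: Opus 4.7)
The plan is to combine the martingale representation theorem for a Brownian filtration with Girsanov's theorem, used twice. First, let $Z_t \triangleq \EE[d\Ptilde/d\PP \mid \FF_t]$ be the density process, which is a strictly positive $\PP$-martingale adapted to the Brownian filtration. By the martingale representation theorem applied under $\PP$, there is a predictable $J$-dimensional process $\phi$ with $Z_t = 1 + \int_0^t \phi_u \cdot dB_u$. Setting $\beta_u \triangleq \phi_u / Z_u$ (well-defined since $Z > 0$), one obtains $Z = \mathcal{E}\bigl(\int \beta \cdot dB\bigr)$, and Girsanov's theorem then identifies $B_t - \int_0^t \beta_u du$ as a $\Ptilde$-Brownian motion. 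Comparing with \eqref{eq:tilde_B} pins down $\alpha = -\beta$ up to a $\PP$-null set, and in particular the process $\beta$ is uniquely determined by $\Ptilde$.

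For the main assertion, fix an arbitrary $\Ptilde$-martingale $\Mtilde$. By the Bayes rule for conditional expectations, the process $N_t \triangleq Z_t \Mtilde_t$ is a $\PP$-martingale with respect to the same filtration. A second application of the martingale representation theorem under $\PP$ produces a predictable $\xi$ such that $N_t = N_0 + \int_0^t \xi_u \cdot dB_u$. It then remains to recover $\Mtilde_t = N_t / Z_t$ as a stochastic integral against $\Btilde$: applying Itô's formula to compute $d(1/Z_t)$, then integration by parts to the product $N_t \cdot (1/Z_t)$, and finally substituting $dB_u = d\Btilde_u + \beta_u du$, the finite-variation terms should cancel to leave
\begin{equation*}
\Mtilde_t = \Mtilde_0 + \int_0^t \widetilde{\psi}_u \cdot d\Btilde_u,
\qquad \widetilde{\psi}_u \triangleq \frac{\xi_u}{Z_u} - \Mtilde_u \beta_u.
\end{equation*}

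The main obstacle is verifying the algebraic cancellation in the last step cleanly: the drift produced by $d(1/Z)$ via the convexity correction has to be exactly absorbed by the cross-variation $d\langle N, 1/Z\rangle$ together with the drift change induced by switching from $dB$ to $d\Btilde$. This is a short but sign-sensitive Itô-calculus computation. No integrability hypothesis on $\widetilde{\psi}$ is needed to conclude the representation, since the right-hand side is automatically a $\Ptilde$-local martingale that coincides $\Ptilde$-almost surely with the $\Ptilde$-martingale $\Mtilde$. Conceptually, the lemma records the well-known fact that completeness of the Brownian filtration $\FF^B$ under $\PP$ transfers to completeness under any equivalent measure $\Ptilde$, with $\Btilde$ playing the role of the driving Brownian motion.
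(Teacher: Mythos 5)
Your proof is correct, and the cancellation you flag as the main obstacle does go through exactly as you wrote it: the drift $\Mtilde_t|\beta_t|^2\,dt$ produced by $d(1/Z_t)$ and the cross-variation term $-(\xi_t\cdot\beta_t/Z_t)\,dt$ are precisely offset by the drift generated when $dB_t$ is replaced by $d\Btilde_t+\beta_t\,dt$, leaving $\widetilde{\psi}_t=\xi_t/Z_t-\Mtilde_t\beta_t$ as the integrand. There is nothing to compare against in the paper, which explicitly recalls this lemma ``without the proof'' as a known consequence of Girsanov and martingale representation; your argument is the standard one, with the only small caveat that the representation theorem should be invoked in its local-martingale form for $Z$ and $Z\Mtilde$ (harmless in a Brownian filtration), and that the identification $\alpha=-\beta$ follows because the difference of the two $\Ptilde$-Brownian motions is a continuous finite-variation $\Ptilde$-local martingale and hence vanishes.
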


\begin{proposition}
  \label{prop:compl}
  Assume that $f$ and $g$ satisfy Assumption \ref{as:exp_moments}, $U$
  satisfies Assumption \ref{as:exp_utility} and that the filtration is
  generated by the Brownian motion $B$ (that is, \eqref{eq:5} holds true).
  Let $\Ptilde$ be the probability measure defined in \eqref{eq:ptilde_density},
  $\Stilde$ be the price process defined in \eqref{eq:stilde} and $\Btilde$ be
  the $J$-dimensional Brownian motion under $\Ptilde$ given by
  \eqref{eq:tilde_B}.

  Then the financial model determined by the price process $\Stilde$ given by
  \eqref{eq:stilde} is complete if and only if the $J\times J$-dimensional
  matrix process $\widetilde{\sigma} = (\widetilde{\sigma}_t)$ in the
  martingale representation
  \begin{equation}
    \label{eq:stilde_sigmatilde}
    \Stilde_t=\Stilde_0 + \int_0^t \sigt_u\cdot d\Btilde_u 
  \end{equation}
  or, component-wise,
  \begin{displaymath}
    \Stilde^i_t=\Stilde^i_0 + \sum_{1\leq j\leq J} \int_0^t
    \sigt_u^{ij}d\Btilde^j_u , \quad 1\leq i\leq J, 
  \end{displaymath}
  has full rank almost everywhere with respect to the product measure
  $d\mathbb{P}[\omega]\times dt$.
\end{proposition}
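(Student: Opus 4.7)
The plan is to establish the equivalence in two directions, with the key tool being Lemma \ref{lemma:compl_tilde_B}: every $\Ptilde$-martingale admits a stochastic integral representation with respect to $\Btilde$. Since $\Stilde$ itself is a $\Ptilde$-martingale and \eqref{eq:stilde_sigmatilde} links $d\Stilde$ to $d\Btilde$ through $\sigt$, the question whether integrals with respect to $\Stilde$ exhaust all integrals with respect to $\Btilde$ reduces to the pointwise invertibility of $\sigt$.

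For the sufficient direction, assume $\sigt_t(\omega)$ has full rank $d\PP\times dt$-a.e. Given an $\FF_T$-measurable claim $h$ (with enough integrability for the argument, e.g.\ $h\in L^2(\Ptilde)$), form the $\Ptilde$-martingale $\Mtilde_t\triangleq\Etilde[h|\FF_t]$; Lemma \ref{lemma:compl_tilde_B} yields a predictable $J$-dimensional process $\phi$ with
\begin{equation*}
\Mtilde_t=\Mtilde_0+\int_0^t\phi_u\cdot d\Btilde_u.
\end{equation*}
I would then define $H_u\triangleq(\sigt_u^{\top})^{-1}\phi_u$ on the full-rank set (and zero on the exceptional null set), verify its predictability, and, using \eqref{eq:stilde_sigmatilde} in the form $H\cdot d\Stilde=(\sigt^{\top}H)\cdot d\Btilde$, conclude
\begin{equation*}
\int_0^t H_u\cdot d\Stilde_u=\int_0^t\phi_u\cdot d\Btilde_u=\Mtilde_t-\Mtilde_0,
\end{equation*}
so that $h=\Mtilde_0+\int_0^T H_u\cdot d\Stilde_u$ is the required representation.

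For the converse I argue by contradiction. Assume that $\sigt$ fails to have full rank on a predictable set $A\subset\Omega\times[0,T]$ of strictly positive $d\PP\times dt$ measure. Then on $A$ the kernel of $\sigt_t^{\top}$ is non-trivial, and a measurable selection argument produces a bounded predictable $J$-dimensional process $\xi$ with $\sigt_t^{\top}\xi_t=0$ on $A$, $\xi_t=0$ off $A$, and $\{\xi\neq 0\}$ of positive $d\PP\times dt$ measure. The process $\Mtilde_t\triangleq\int_0^t\xi_u\cdot d\Btilde_u$ is then a non-trivial square-integrable $\Ptilde$-martingale, and $\Mtilde_T$ is an $\FF_T$-measurable claim. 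If the market were complete there would exist $c\in\Real$ and a predictable $H$ with
\begin{equation*}
\Mtilde_T=c+\int_0^T H_u\cdot d\Stilde_u=c+\int_0^T(\sigt_u^{\top}H_u)\cdot d\Btilde_u.
\end{equation*}
Taking $\Etilde$ gives $c=0$, and uniqueness of the integrand in the representation against $\Btilde$ forces $\xi_u=\sigt_u^{\top}H_u$ $d\PP\times dt$-a.e., contradicting $\sigt^{\top}\xi=0$ together with $\xi\neq 0$ on $A$.

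The main technical obstacle I expect is the measurable selection argument producing the predictable, non-zero $\xi$ lying pointwise in $\ker\sigt^{\top}$ on the rank-deficient set. This can be handled by standard results on measurable dependence of the kernel on the matrix entries --- for example, one can take $\xi_t$ to be the orthogonal projection of a fixed deterministic vector onto $\ker\sigt_t^{\top}$ and choose that vector so that the projection is non-zero on a subset of $A$ of positive measure. The remaining technicalities (predictability and integrability of $H_u=(\sigt_u^{\top})^{-1}\phi_u$ in the sufficient direction, and the fact that $\Mtilde_T$ falls within the class of claims to which the completeness hypothesis applies) are routine under Assumptions \ref{as:exp_moments} and \ref{as:exp_utility}.
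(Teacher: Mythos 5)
Your sufficient direction is exactly the paper's: apply Lemma \ref{lemma:compl_tilde_B} to the conditional-expectation martingale of the claim and invert $\sigt$ (the paper writes the inverse on the other side, as $\gamma_u\cdot\sigt_u^{-1}$ for a matrix-valued integrand, but the content is the same). The necessity direction follows the same strategy as the paper --- manufacture a claim from a kernel direction of the degenerate volatility matrix and show it cannot be replicated --- but you have picked the wrong kernel, and this breaks your final contradiction.

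Concretely: since $H\cdot d\Stilde=(\sigt^{\top}H)\cdot d\Btilde$, the integrands against $\Btilde$ attainable by trading in $\Stilde$ are exactly those of the form $\sigt_t^{\top}H_t$, i.e.\ they lie pointwise in $\operatorname{Range}(\sigt_t^{\top})$. You choose $\xi_t\in\ker(\sigt_t^{\top})$ and then claim that $\xi=\sigt^{\top}H$ together with $\sigt^{\top}\xi=0$ is contradictory. It is not: for a non-normal square matrix $M$ the intersection $\ker(M)\cap\operatorname{Range}(M)$ can be non-trivial (take $M=\left(\begin{smallmatrix}0&1\\0&0\end{smallmatrix}\right)$, for which both subspaces equal the span of $e_1$), so $\xi=\sigt^{\top}H\neq0$ with $(\sigt^{\top})^{2}H=0$ is perfectly consistent, and nothing forces $\sigt$ to be symmetric or normal here. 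The subspace that is genuinely unattainable is the orthogonal complement of $\operatorname{Range}(\sigt^{\top})$, which is $\ker(\sigt)$, not $\ker(\sigt^{\top})$. The fix is one line: select $\xi_t\in\ker(\sigt_t)$ (this is precisely the paper's process $a$; note $\operatorname{rank}(\sigt)=\operatorname{rank}(\sigt^{\top})$, so this kernel is non-trivial on the same set $A$). Then $\xi_u=\sigt_u^{\top}H_u$ forces $|\xi_u|^2=\langle\xi_u,\sigt_u^{\top}H_u\rangle=\langle\sigt_u\xi_u,H_u\rangle=0$, i.e.\ $\xi=0$ a.e., the desired contradiction. (The paper phrases the same point via the vanishing cross-variation $\langle\int\xi\,d\Btilde,\Stilde\rangle_t=\int_0^t\sigt_u\xi_u\,du=0$, so that $\int_0^T\xi_u\,d\Btilde_u$ is orthogonal to every stochastic integral against $\Stilde$ and hence not replicable.) Your measurable-selection construction of a predictable non-vanishing $\xi$, and your use of uniqueness of the $\Btilde$-integrand in place of the orthogonality argument, are both fine once the kernel is corrected; indeed your selection of $\xi$ is spelled out more carefully than in the paper.
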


\begin{proof}
  Let us assume that the matrix $\sigt$ has full rank (or in other words it is
  invertible) almost everywhere with respect to the product measure
  $d\PP[\omega]\times dt$. Let $\Gamma$ be an $\FF_T$-measurable
  $J$-dimensional random variable. Then by Lemma \ref{lemma:compl_tilde_B}
  there exists an adapted $J\times J$-dimensional matrix process
  $\gamma=(\gamma_t)_{0\le t\le T}$ such that
  \begin{align*}
    \Gamma = \Etilde[\Gamma]+\int_0^T\gamma_u\cdot d\Btilde_u.
  \end{align*}
  Since $\sigt$ is invertible $d\PP[\omega]\times dt$-almost everywhere, and
  by \eqref{eq:stilde_sigmatilde}, the above expression is equal to
  \begin{align*}
    \Gamma = \Etilde[\Gamma]+\int_0^T\gamma_u\cdot \sigt_u^{-1}\cdot
    d\Stilde_u.
  \end{align*}
  Hence we conclude that the market is complete with respect to $\Stilde$,
  since every contingent claim is replicable.

  The proof of the converse statement will be done by contradiction.  Assume
  that the market is complete with respect to $\Stilde$, and suppose that the
  matrix $\sigt_t$ is not invertible $d\PP[\omega]\times dt$-almost
  everywhere. Let us define the following adapted vector process
  $a=(a^j_t)_{0\le t\le T}, \ 1\le j\le J$. For each $0\le t\le T$, let the
  vector $a_t$ be equal to a non-trivial vector from the null-space of the
  matrix $\sigt_t$. Since we assumed that $\sigt_t$ is not invertible, the
  null-space of $\sigt_t$ is not trivial. That is
  \begin{align*}
    a_t=n_t, \ n_t\in\hbox{Null}(\sigt_t), \ n_t\ne \vec{0},
  \end{align*}
  where $\hbox{Null}(\sigt_t)$ is the Null-space of $\sigt_t$, and $\vec{0}$
  is a $J$-dimensional zero vector. Let us consider two $J$-dimensional
  $\Ptilde$-martingales
  \begin{align*}
    \int_0^ta_ud\Btilde_u, \hbox{ and } \int_0^t\sigt_u\cdot d\Btilde_u,
  \end{align*}
  or component-wise,
  \begin{align*}
    \int_0^ta_u^kd\Btilde^k_u, \hbox{ and } \sum_{1\le j\le
      J}\int_0^t\sigt_u^{kj} d\Btilde_u^j,\quad 1\le k\le J.
  \end{align*}
  Due to the choice of $a$, at any time $t$ the dot-product of $a_t$ and
  $\sigt_t^{\mathrm{T}}$ is a zero vector, and therefore their cross-variation
  is
  \begin{align*}
    \variance{\int_0^{\cdot}a_ud\Btilde_u,\ \int_0^{\cdot}\sigt_u\cdot
      d\Btilde_u}_t = \int_0^ta_u\cdot\sigt_u^{\mathrm{T}}du = \vec{0}, \quad
    0\le t\le T,
  \end{align*}
  as well as
  \begin{align*}
    \variance{\int_0^{\cdot}a_ud\Btilde_u,\ \Stilde_{\cdot}}_t=\vec{0}.
  \end{align*}
  Since the cross-variation process of two $\Ptilde$-martingales is zero, the
  product of these martingales is a zero process, which implies that
  $\int_0^ta_ud\Btilde_u$ and $\Stilde_t$ are orthogonal. Notice that the
  random variable
  \begin{align}\label{eq:non_repl}
    \int_0^Ta_ud\Btilde_u
  \end{align}
  is different from zero because by the assumption the matrix $\sigt$ does not
  have full rank almost everywhere with respect to the product measure
  $d\PP[\omega]\times dt$. We assumed that the market is complete and
  therefore each non-trivial $\FF_T$-measurable random variable is
  non-trivially replicable. However, due to orthogonality of
  $\int_0^ta_ud\Btilde_u$ and $\Stilde_t$, the $\FF_T$-measurable random
  variable \eqref{eq:non_repl} is not $\Stilde$-replicable, which leads us to
  a contradiction. Therefore the market is incomplete with respect to
  $\Stilde$.
\end{proof}

The computation of the $J\times J$-matrix volatility process
$\widetilde{\sigma}=(\widetilde{\sigma}_t)$ can often be done with the help of
the Clark-Ocone formula and the Malliavin calculus as is illustrated in the
following lemma. For a random variable $\psi$ we denote by $\Dm_t(\psi) =
(\Dm^j_t( \psi))_{1\leq j\leq J}$ the Malliavin derivative of $\psi$ at time
$t$ with respect to the Brownian Motion $B$.

\begin{lemma}\label{lemma:malliavin}
  In addition to conditions of Proposition \ref{prop:compl} assume that $f$
  and $g$ satisfy Assumption \ref{as:exp_moments} and are Malliavin
  differentiable.
 
  Then the matrix $\widetilde{\sigma}_t$ in the integral representation
  \eqref{eq:stilde_sigmatilde} is given by
  \begin{equation}
    \label{eq:sigma_ij}
    \sigt^{ij}_t= \Etilde[\Dm^j_t(f^i) + A(x+p-g) (f^i-\Stilde_t^i) 
    \Dm_t^j(g)|\FF_t],\quad 1\leq i,j\leq J,    
  \end{equation}
  where $A= A(x)$ is the absolute risk-aversion coefficient of $U$ given by
  \begin{equation}
    \label{eq:11}
    A(x) \triangleq -\frac{U''(x)}{U'(x)}, \quad x\in \mathcal{R}.
  \end{equation}
\end{lemma}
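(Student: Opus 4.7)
The plan is to represent $\Stilde^i$ as a ratio of two $\PP$-martingales, apply the Clark--Ocone formula under $\PP$ to each, extract the coefficient against $dB^j$ via It\^o's formula, and then convert to the $\Ptilde$-representation using Bayes' rule. Concretely, I would denote $Z \triangleq d\Ptilde/d\PP = U'(x+p-g)/\EE[U'(x+p-g)]$, set $Z_t \triangleq \EE[Z\mid\FF_t]$ and $M_t^i \triangleq \EE[f^i Z \mid\FF_t]$, so that Bayes' formula gives $\Stilde_t^i = M_t^i/Z_t$. A key preliminary step is the chain rule for the density: since $Z$ is a smooth function of $g$, $\Dm_t^j Z = -U''(x+p-g)\Dm_t^j g / \EE[U'(x+p-g)]$, and hence
\begin{equation*}
   \Dm_t^j Z / Z = A(x+p-g)\,\Dm_t^j g.
\end{equation*}

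Next I would apply the Clark--Ocone formula under $\PP$ to both $\PP$-martingales, yielding
\begin{align*}
  M_t^i &= M_0^i + \sum_j \int_0^t \EE[\Dm_u^j(f^iZ)\mid\FF_u]\,dB_u^j, \\
  Z_t   &= 1     + \sum_j \int_0^t \EE[\Dm_u^j Z\mid\FF_u]\,dB_u^j,
\end{align*}
and expand $\Dm_u^j(f^iZ) = Z\,\Dm_u^j(f^i) + f^i\,\Dm_u^j Z$ by the Malliavin product rule. Then It\^o's formula applied to $\Stilde^i = M^i/Z$ gives a Brownian coefficient against $dB_u^j$ equal to
\begin{equation*}
   \frac{\EE[\Dm_u^j(f^iZ)\mid\FF_u]}{Z_u} - \Stilde_u^i\,\frac{\EE[\Dm_u^j Z\mid\FF_u]}{Z_u},
\end{equation*}
the remaining terms being of finite variation. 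Since $\Stilde^i$ is already known to be a $\Ptilde$-martingale, the $\sigt^{ij}$ appearing in \eqref{eq:stilde_sigmatilde} must coincide with this coefficient: the finite-variation part is absorbed into the Girsanov drift $\Btilde^j - B^j$, and the cross-variations $\langle \Stilde^i,\Btilde^j\rangle$ and $\langle\Stilde^i,B^j\rangle$ agree.

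To conclude, I would apply Bayes' rule in the form $\EE[Y\mid\FF_u]/Z_u = \Etilde[Y/Z\mid\FF_u]$ and substitute the identity $\Dm_u^j Z/Z = A(x+p-g)\Dm_u^j g$ obtained above. This yields
\begin{align*}
   \sigt_u^{ij}
   &= \Etilde\bigl[\Dm_u^j(f^i) + f^i\,A(x+p-g)\,\Dm_u^j g\,\bigm|\,\FF_u\bigr]
      - \Stilde_u^i\,\Etilde\bigl[A(x+p-g)\,\Dm_u^j g\,\bigm|\,\FF_u\bigr]\\
   &= \Etilde\bigl[\Dm_u^j(f^i) + A(x+p-g)(f^i - \Stilde_u^i)\,\Dm_u^j g\,\bigm|\,\FF_u\bigr],
\end{align*}
where the last equality uses $\FF_u$-measurability of $\Stilde_u^i$. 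This is \eqref{eq:sigma_ij}.

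The main obstacle I anticipate is technical rather than conceptual: justifying that Clark--Ocone applies to $f^iZ$ and $Z$ (both must be square-integrable Malliavin-differentiable under $\PP$) and that the reconciliation of martingale parts under the Girsanov change of measure is rigorous. Malliavin differentiability of $Z$ follows from that of $g$ together with the smoothness of $U'$; the required $L^p$-integrability of $Z$, $f^iZ$, $\Dm_u^j Z$ and $\Dm_u^j(f^iZ)$ should follow from Assumption \ref{as:exp_moments} combined with the two-sided exponential bound on $U'$ derived from Assumption \ref{as:exp_utility} already used in the proof of Lemma \ref{lemma:p_existence}.
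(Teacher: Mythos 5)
Your proposal is correct and follows essentially the same route as the paper: both represent $\Stilde^i$ through the density process via the $\PP$-martingales $Z_t$ and $\EE[Z_Tf^i|\FF_t]$, identify their integrands with the Clark--Ocone formula and the Malliavin product rule, use the chain rule $\Dm^j_tZ_T = A(x+p-g)U'(x+p-g)\Dm^j_t(g)/\EE[U'(x+p-g)]$, and pass to $\Ptilde$-conditional expectations by Bayes' rule. The only cosmetic difference is that you apply It\^o to the quotient $M^i/Z$ while the paper differentiates the product $Z_t\Stilde_t$ and compares it with the martingale representation of $R_t=\EE[Z_T\Stilde_T|\FF_t]$; the resulting identity $\sigt_t=\Sigma_t-\Stilde_t\alpha_t$ and the final formula are the same.
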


\begin{proof}
  Our goal is to compute the integrand in the form of a $J\times
  J$-dimensional matrix process in the martingale representation
  \eqref{eq:stilde_sigmatilde} of the process $\Stilde$.

  Since the density process is
  $Z_t=\EE\left[\left.\frac{d\Ptilde}{d\PP}\right|\FF_t\right]$, it can be
  also expressed as
  \begin{align}\label{eq:Zt_g}
    Z_t=\EE\left[\left.\frac{d\Ptilde }{d\PP
        }\right|\FF_t\right]=\frac{\EE[U'(x+p-g) |\FF_t] }{\EE[U'(x+p-g)]}.
  \end{align}
  Let us introduce another $J$-dimensional $\PP$-martingale. Let
  \begin{align*}
    R_t=\EE[Z_T\Stilde_T|\FF_t].
  \end{align*}
  The density process is an exponential martingale and so for some adapted
  $J$-dimensional process $\alpha=(\alpha_t)_{0\le t\le T}$ we have
  $Z_t=\mathcal{E}(\alpha_t\!\cdot\! B_t)$, where $\mathcal{E}$ is the
  stochastic exponent. Since
  \begin{align*}
    \Stilde_t=\Etilde[f|\FF_t]=\frac{1}{Z_t}\EE[Z_Tf|\FF_t],
  \end{align*}
  it follows that $R_t=Z_t\Stilde_t$. The process $R$ is a $\PP$-martingale,
  and since the density process $Z$ is strictly positive $\PP$-almost surely,
  it has the martingale representation
  \begin{align}
    R_t=\Stilde_0+\int_0^tZ_u {\Sigma}_u\!\cdot\!  dB_u\label{eq:R_mart}
  \end{align}
  for some progressively measurable matrix process ${\Sigma}$. By
  differentiating $Z_t\Stilde_t$ we obtain that the differential of $R$ is
  \begin{align}
    dR_t=Z_t\widetilde{\sigma}_t\!\cdot\!dB_t +
    Z_t\Stilde_t\!\cdot\!\alpha_t\!\cdot\!  dB_t.\label{eq:R2}
  \end{align}
  % when on the other hand it can be written as in (\ref{eq:R}).
  By comparing (\ref{eq:R2}) and the differential of (\ref{eq:R_mart}) we
  obtain
  \begin{align}
    \widetilde{\sigma}_t=\Sigma_t-\Stilde_t\!\cdot\!\alpha_t.\label{eq:sigma}
  \end{align}
  Notice that the above expression for $\widetilde{\sigma}$ has two
  processes $\Sigma$ and $\alpha$ that are only known to exist and to
  be unique (in the sense discussed in Remark \ref{remark:unique}),
  but are not known explicitly.

  Due to the assumption that the random variable $g$ has exponential moments
  and is Malliavin differentiable and using the fact that
  $dZ_t=Z_t\alpha_t\!\cdot\! dB_t$ we deduce from Clark-Ocone formula that
  \begin{align}
    Z_t\alpha_t^j=\EE[\Dm_t^j(Z_T) |\FF_t],\quad 1\le j\le
    J. \label{eq:Zalpha}
  \end{align}
  Similarly (\ref{eq:R_mart}) implies that
  \begin{align}
    Z_t\Sigma_t^{ij}=\EE[\Dm_t^j(Z_Tf^i)
    |\FF_t]=\EE[Z_T\Dm_t^j(f^i)+f^i\Dm_t^j(Z_T)|\FF_t],\nonumber
  \end{align}
  and therefore
  \begin{align}
    \Sigma_t^{ij}=\Etilde[\Dm_t^j(f^i)|\FF_t]+\EE\left[\left.\frac{1}{Z_t}f^i
        \Dm_t^j(Z_T)\right|\FF_t\right].\label{eq:Sigma}
  \end{align}
  Hence by combining (\ref{eq:sigma}), (\ref{eq:Zalpha}), (\ref{eq:Sigma}) we
  obtain
  \begin{align}\label{eq:sigma2}
    \sigt^{ij}_t& =\Etilde[\Dm_t^j(f^i)|\FF_t] +
    \EE\left[\left.\frac{1}{Z_t}f^i\Dm_t^j(Z_T)\right|\FF_t\right]
    -\Stilde_t^i\EE\left[\left. \frac{1}{Z_t}\Dm_t^j(Z_T)\right|\FF_t
    \right]\nonumber\\
    &=\Etilde[\Dm_t^j(f^i)|\FF_t] +\EE\left[\left. \frac{1}{Z_t}(f^i -
        \Stilde_t^i)\Dm_t^j(Z_T)\right|\FF_t\right].
  \end{align}

  Now let us use the assumption that the random variable $g$ is Malliavin
  differentiable. Then from \ref{eq:Zt_g} it follows that
  \begin{align*}
    \Dm_t^jZ_T = \frac{-U''(x+p-g)\Dm_t^j(g)}{\EE[U'(x+p-g)]} =
    \frac{A(x+p-g)U'(x+p-g)\Dm_t^j(g)}{\EE[U'(x+p-g)]},
  \end{align*}
  where $A(x)$ is the absolute risk-aversion coefficient as defined in
  \eqref{eq:11}. Therefore the last term of \eqref{eq:sigma2} can be expressed
  as
  \begin{align*}
    \EE\left[\left. \frac{1}{Z_t}(f^i -
        \Stilde_t^i)\Dm_t^j(Z_T)\right|\FF_t\right] = \Etilde[ A(x+p-g)
    (f^i-\Stilde_t^i) \Dm_t^j(g)|\FF_t].
  \end{align*}
  We can now put everything together to obtain
  \begin{align*}
    \sigt^{ij}_t= \Etilde[\Dm^j_t(f^i) + A(x+p-g) (f^i-\Stilde_t^i)
    \Dm_t^j(g)|\FF_t].
  \end{align*}
\end{proof}

The expression \eqref{eq:sigma_ij} for $\widetilde{\sigma}$ can be simplified
if we assume that $U$ is the exponential utility function, and that the
contingent claim $g$ is a \emph{standard} European option on $f$. In the
formulation of the result we shall use the notation
\begin{equation}
  \label{eq:13}
  \covtilde(\alpha,\beta|\FF_t) = \Etilde[\alpha \beta|\FF_t] -
  \Etilde[\alpha|\FF_t]\Etilde[\beta|\FF_t] 
\end{equation}
for the conditional covariance of the random variables $\alpha$ and $\beta$
with respect to the pricing measure $\widetilde{\mathbb{P}}$ and the
information at time $t$.

\begin{lemma}
  \label{lem:G(f)}
  In addition to conditions of Lemma \ref{lemma:malliavin} assume that
  \begin{displaymath}
    g=G(f)
  \end{displaymath}
  for some almost everywhere differentiable function $G:
  \Real^J\longrightarrow\Real$, and that the utility function $U=U(x)$ is of
  exponential form:
  \begin{displaymath}
    U(x) = \frac1\gamma e^{-\gamma x}, \quad x\in \mathcal{R},
  \end{displaymath}
  for some $\gamma >0$. Then the matrix $\widetilde{\sigma}_t$ in the integral
  representation \eqref{eq:stilde_sigmatilde} is given by
  \begin{align}
    \widetilde{\sigma}^{ij}_t&=\Etilde[\Dm_t^j(f^i) + \gamma(f^i -
    \widetilde{S}^i_t) \sum_{1\leq k \leq J}\Dm^j_t(f^k)
    \frac{\partial}{\partial x^k} G(f)|\FF_t] \label{eq:sigma_deriv}\\
    &=\Etilde[\Dm_t^j(f^i)|\FF_t] + \gamma \covtilde(f^i, \sum_{1\leq k \leq
      J}\Dm^j_t(f^k) \frac{\partial}{\partial x^k} G(f)|\FF_t],\label{eq:sigma_deriv_covar} \\
    & \quad 1\leq i,j\leq J.\nonumber
  \end{align}
\end{lemma}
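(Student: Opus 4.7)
The plan is to specialize the general formula \eqref{eq:sigma_ij} from Lemma \ref{lemma:malliavin} using two facts: (i) the exponential form of $U$ makes the absolute risk aversion constant, and (ii) the chain rule for the Malliavin derivative expresses $\Dm^j_t(g)$ in terms of $\Dm^j_t(f^k)$ and $\partial G/\partial x^k$. So there is essentially no heavy machinery to deploy — the statement should drop out by direct substitution.

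More concretely, I would first observe that for $U(x)=\frac{1}{\gamma}e^{-\gamma x}$ one computes $U'(x) = -e^{-\gamma x}$ and $U''(x) = \gamma e^{-\gamma x}$, so
\begin{equation*}
  A(x+p-g) = -\frac{U''(x+p-g)}{U'(x+p-g)} = \gamma,
\end{equation*}
a deterministic constant. Second, since $g = G(f)$ with $G$ (almost everywhere) differentiable and $f$ Malliavin differentiable (by Assumption \ref{as:exp_moments} and the added hypothesis), the chain rule for the Malliavin derivative yields
\begin{equation*}
  \Dm_t^j(g) = \Dm_t^j\bigl(G(f)\bigr) = \sum_{1\leq k\leq J} \frac{\partial G}{\partial x^k}(f)\, \Dm_t^j(f^k).
\end{equation*}
Plugging both of these into \eqref{eq:sigma_ij} gives directly the first expression \eqref{eq:sigma_deriv}.

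For the covariance form \eqref{eq:sigma_deriv_covar}, I would split the conditional expectation into its two additive pieces and then use that $\Stilde^i_t = \Etilde[f^i|\FF_t]$ is $\FF_t$-measurable, so it can be pulled out of conditional expectations. Writing $Y^j_t \triangleq \sum_k \Dm_t^j(f^k)\frac{\partial G}{\partial x^k}(f)$ for brevity, we get
\begin{equation*}
  \Etilde[(f^i-\Stilde^i_t) Y^j_t | \FF_t] = \Etilde[f^i Y^j_t|\FF_t] - \Etilde[f^i|\FF_t]\,\Etilde[Y^j_t|\FF_t] = \covtilde(f^i, Y^j_t | \FF_t),
\end{equation*}
which is exactly the covariance term in \eqref{eq:sigma_deriv_covar}.

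I do not anticipate a genuine obstacle: the only thing to check carefully is that the chain rule for $\Dm$ applies under the stated hypotheses. Since $G$ is only assumed differentiable almost everywhere rather than $C^1$, one may need to invoke a version of the Malliavin chain rule that allows Lipschitz or a.e.\ differentiable $G$ (e.g.\ the standard extension valid when the law of $f$ is absolutely continuous, so that $\nabla G(f)$ is well-defined a.s.); this is the only non-purely-algebraic point and is standard in the Malliavin calculus literature, so I would simply cite it and proceed.
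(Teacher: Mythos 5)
Your proposal is correct and follows essentially the same route as the paper's own proof: specialize $A(x+p-g)=\gamma$ for the exponential utility, apply the Malliavin chain rule to write $\Dm_t^j(g)=\sum_k \Dm_t^j(f^k)\frac{\partial G}{\partial x^k}(f)$, substitute into \eqref{eq:sigma_ij}, and then use the $\FF_t$-measurability of $\Stilde^i_t=\Etilde[f^i|\FF_t]$ to recognize the conditional covariance. Your added caveat about needing a chain rule valid for merely a.e.\ differentiable $G$ is a reasonable point of care that the paper itself passes over silently.
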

\begin{proof}
  First we remind that for the exponential utility $U(x) = \frac1\gamma
  e^{-\gamma x}$, the absolute risk-aversion coefficient $A(x)$ is constant
  and is equal to
  \begin{align*}
    A(x)=\gamma,\hbox{ for every } x\in\Real.
  \end{align*}
  Equation \eqref{eq:sigma_deriv} follows directly from \eqref{eq:sigma_ij}
  and from the assumption that $g$ is an almost everywhere differentiable
  function of $f$,
  \begin{align*}
    \Dm_t^j(g)=\Dm_t^j(G(f))=\sum_{1\leq k \leq J}\Dm^j_t(f^k)
    \frac{\partial}{\partial x^k} G(f), \quad 1\leq j\leq J,
  \end{align*}
  which leads to
  \begin{align*}
    \sigt^{ij}_t&= \Etilde[\Dm^j_t(f^i) + A(x+p-g) (f^i-\Stilde_t^i)
    \Dm_t^j(g)|\FF_t] \\
    &= \Etilde[\Dm_t^j(f^i) + \gamma(f^i - \widetilde{S}^i_t) \sum_{1\leq k
      \leq J}\Dm^j_t(f^k) \frac{\partial}{\partial x^k} G(f)|\FF_t].
  \end{align*}
  As for \eqref{eq:sigma_deriv_covar}, we have
  \begin{align*}
    \sigt^{ij}_t&= \Etilde[\Dm^j_t(f^i) + A(x+p-g) (f^i-\Stilde_t^i)
    \Dm_t^j(g)|\FF_t]=\\
    &=\Etilde[\Dm_t^j(f^i)|\FF_t] + \gamma\Etilde[f^i\Dm_t^j(g)|\FF_t]
    -\gamma\Etilde[\Stilde_t^i\Dm_t^j(g)|\FF_t]\\
    &=\Etilde[\Dm_t^j(f^i)|\FF_t]+\gamma\Etilde[f^i\Dm_t^j(g)|\FF_t]
    - \gamma\Stilde_t^i\Etilde[\Dm_t^j(g)|\FF_t]\\
    &=\Etilde[\Dm_t^j(f^i)|\FF_t]+\gamma\Etilde[f^i\Dm_t^j(g)|\FF_t]
    - \gamma\Etilde[f^i|\FF_t]\Etilde[\Dm_t^j(g)|\FF_t]\\
    &=\Etilde[\Dm_t^j(f^i)|\FF_t]+\gamma\covtilde(f^i,\Dm_t^j(g)|\FF_t)\\
    &=\Etilde[\Dm_t^j(f^i)|\FF_t]+\gamma\covtilde(f^i,\sum_{1\leq k \leq
      J}\Dm^j_t(f^k) \frac{\partial}{\partial x^k} G(f)|\FF_t).
  \end{align*}
\end{proof}

The computation of the matrix $\widetilde{\sigma}$ becomes particularly simple
in the case when the payoffs of traded contingents claims $f$ are the terminal
values of the Brownian Motions $B$, that is,
\begin{equation}
  \label{eq:12}
  f^j = B_T^j, \quad 1\leq j\leq J. 
\end{equation}
Note that in this case, in the absence of any trading by the large investor,
that is, in the case $H=0$ the price process of the stocks is given by $B$. In
other words, in the absence of the large investor we have the
multi-dimensional Bachelier model. In the statement of the next lemma we use
the standard notation
\begin{displaymath}
  \delta_{ij} = 1_{\{i=j\}}
\end{displaymath}
for the Kronecker delta.

\begin{lemma}
  \label{lem:Bachelier}
  In addition to conditions of Lemma \ref{lem:G(f)} assume \eqref{eq:12}. Then
  \begin{equation} 
    \label{eq:sigma_G(f)_bach}
    \widetilde{\sigma}^{ij}_t = \delta_{ij} + \gamma \covtilde\left(\left.B_T^i,
    \frac{\partial}{\partial x^j} G(B_T)\right|\FF_t\right), \quad 
    1\leq i,j\leq J.
  \end{equation} 
\end{lemma}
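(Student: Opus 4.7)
The plan is to specialize the covariance form \eqref{eq:sigma_deriv_covar} from Lemma \ref{lem:G(f)} to the Bachelier setting $f^j = B_T^j$. Since Lemma \ref{lem:G(f)} already delivers the formula
\begin{equation*}
  \sigt^{ij}_t = \Etilde[\Dm_t^j(f^i)|\FF_t] + \gamma \covtilde\Bigl(f^i,\, \sum_{1\leq k \leq J}\Dm^j_t(f^k)\,\tfrac{\partial}{\partial x^k} G(f)\,\Big|\,\FF_t\Bigr),
\end{equation*}
all that remains is to evaluate the Malliavin derivatives of $f^i = B_T^i$ and substitute.

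First I would recall that for the coordinate Brownian motions the Malliavin derivative is simply the Kronecker delta on $[0,T]$:
\begin{equation*}
  \Dm_t^j(B_T^i) = \delta_{ij}, \qquad 0\leq t\leq T,\ 1\leq i,j\leq J.
\end{equation*}
Applying this to the first term gives $\Etilde[\Dm_t^j(f^i)|\FF_t] = \Etilde[\delta_{ij}|\FF_t] = \delta_{ij}$, which accounts for the Kronecker-delta summand in \eqref{eq:sigma_G(f)_bach}.

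Next I would simplify the inner sum appearing inside the conditional covariance. Using $\Dm_t^j(f^k) = \delta_{jk}$ collapses the $k$-sum to a single term, namely
\begin{equation*}
  \sum_{1\leq k \leq J}\Dm^j_t(f^k)\,\tfrac{\partial}{\partial x^k} G(f) = \tfrac{\partial}{\partial x^j} G(B_T).
\end{equation*}
Substituting this together with $f^i = B_T^i$ into \eqref{eq:sigma_deriv_covar} yields exactly \eqref{eq:sigma_G(f)_bach}. There is no genuine obstacle here: the argument is a direct specialization, and the only implicit point worth mentioning is that Assumption \ref{as:exp_moments} together with the almost-everywhere differentiability of $G$ ensures that $g = G(B_T)$ is Malliavin differentiable so that Lemma \ref{lem:G(f)} is applicable, and that the conditional covariance in \eqref{eq:sigma_G(f)_bach} is well defined.
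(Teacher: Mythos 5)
Your proposal is correct and takes essentially the same route as the paper's own proof: both compute $\Dm_t^j(B_T^i)=\delta_{ij}$, collapse the $k$-sum inside the covariance to $\frac{\partial}{\partial x^j}G(B_T)$, and substitute into the formula of Lemma \ref{lem:G(f)}. Your added remark about why Lemma \ref{lem:G(f)} is applicable (Malliavin differentiability of $g=G(B_T)$) is a reasonable extra observation but does not change the argument.
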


\begin{proof}
  By the assumption that the payoffs of the contingent claims are Brownian
  Motions,
  \begin{align*}
    \Etilde[\Dm_t^j(f^i)|\FF_t]=\Etilde[\Dm_t^j(B_T^i)|\FF_t]=\delta_{ij}.
  \end{align*}
  It follows that
  \begin{align*}
    \sum_{1\leq k \leq J}\Dm^j_t(B_T^k) \frac{\partial}{\partial x^k}
    G(B_T)=\frac{\partial}{\partial x^j} G(B_T).
  \end{align*}
  Now we can put everything together to obtain
  \begin{align*}
    \sigt^{ij}&=\Etilde[\Dm_t^j(f^i)|\FF_t]+\gamma\covtilde\left(\left.f^i,\sum_{1\leq k
      \leq J}\Dm^j_t(f^k) \frac{\partial}{\partial x^k}
    G(B_T)\right|\FF_t\right)\\
    & = \delta_{ij} + \gamma \covtilde\left(\left.B_T^i, \frac{\partial}{\partial x^j}
    G(B_T)\right|\FF_t\right).
  \end{align*}
\end{proof}

As an important corollary we state the following result showing that in the
framework of the Bachelier model a large class of \emph{convex} (in
appropriate sense) contingent
claims $G(f)$ is replicable. For instance this includes a convex
combination of \emph{long} positions in European calls written on each
individual asset.

\begin{corollary}
  \label{corr:Bachelier}
  Assume conditions of Lemma \ref{lem:Bachelier} . Let $$G = G(x^1,\dots,x^J)=\sum_{j=1}^Jc_j\varphi_j(x^j),$$
where $c_j\ge 0$, $c_j\in\Real$ and $\varphi_j=\varphi_j(x)$ are
one-dimensional convex functions. Then for any contingent claim $g$ of the form
  \begin{equation}
    \label{eq:14}
    g = G(f) = G(B_T), 
  \end{equation} hedging strategy $H$ exists and is defined uniquely $d\mathbb{P}[\omega] \times dt$-a.s.
\end{corollary}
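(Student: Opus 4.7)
My plan is to deduce the corollary from Theorem~\ref{thm:sufficient_condition}. The existence (and, thanks to exponential $U$, uniqueness) of a constant $p$ satisfying both \eqref{eq:ptilde_density} and $\Etilde[g]=p$ is already delivered by Lemma~\ref{lemma:p_existence}, so the only non-trivial hypothesis to verify is completeness of the market driven by $\Stilde$. By Proposition~\ref{prop:compl} this reduces to showing that the volatility matrix $\sigt_t$ of \eqref{eq:stilde_sigmatilde} is invertible $d\PP[\omega]\times dt$-a.e., and Lemma~\ref{lem:Bachelier} already supplies a convenient formula for it. Specializing to $G(x^1,\dots,x^J)=\sum_k c_k\varphi_k(x^k)$, which is a.e.\ differentiable since each convex $\varphi_k$ is, I obtain
\begin{equation*}
  \sigt^{ij}_t=\delta_{ij}+\gamma c_j\,\covtilde\!\left(B_T^i,\varphi_j'(B_T^j)\,\middle|\,\FF_t\right).
\end{equation*}

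The crucial observation I would then exploit is the factorization
\begin{equation*}
  \frac{d\Ptilde}{d\PP}=\frac{e^{\gamma g}}{\EE[e^{\gamma g}]}=\prod_{k=1}^{J}\frac{e^{\gamma c_k\varphi_k(B_T^k)}}{\EE[e^{\gamma c_k\varphi_k(B_T^k)}]},
\end{equation*}
valid because $B^1,\dots,B^J$ are $\PP$-independent and $g=\sum_k c_k\varphi_k(B_T^k)$ is a sum of coordinatewise terms. Since each factor is $\sigma(B^k)$-measurable, the coordinate Brownian motions $B^1,\dots,B^J$ remain \emph{independent} under $\Ptilde$; cleanly, the Girsanov drift $\alpha^j$ that turns $B^j$ into a $\Ptilde$-Brownian motion is $\sigma(B^j)$-measurable, so $\Btilde^1,\dots,\Btilde^J$ are jointly independent and this transfers to the $B^j$ and to their joint conditional laws given $\FF_t$. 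Consequently $B_T^i$ and $\varphi_j'(B_T^j)$ are $\Ptilde$-conditionally independent given $\FF_t$ whenever $i\neq j$, killing the off-diagonal entries of $\sigt_t$.

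For the diagonal, $\sigt^{ii}_t=1+\gamma c_i\,\covtilde(B_T^i,\varphi_i'(B_T^i)|\FF_t)$, and since $\varphi_i$ is convex, $\varphi_i'$ is non-decreasing on its domain. The classical two-independent-copies inequality (take i.i.d.\ copies $X,X'$ from the $\Ptilde$-conditional law of $B_T^i$ given $\FF_t$ and integrate $(X-X')(\varphi_i'(X)-\varphi_i'(X'))\ge 0$) yields $\covtilde(B_T^i,\varphi_i'(B_T^i)|\FF_t)\ge 0$; together with $c_i\ge 0$ and $\gamma>0$, this gives $\sigt^{ii}_t\ge 1$ everywhere, so $\sigt_t$ is invertible on the whole product space. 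Completeness then follows from Proposition~\ref{prop:compl}, Theorem~\ref{thm:sufficient_condition} supplies the hedging strategy $H$ of $g$ with $S^H=\Stilde$ and price $p$, and the $d\PP[\omega]\times dt$-uniqueness is precisely the content of Remark~\ref{remark:unique}.

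The main obstacle I anticipate is the $\Ptilde$-independence claim: what I actually need is not merely independence of $B_T^1,\dots,B_T^J$ as random variables but that their joint conditional law given $\FF_t$ factors as a product of marginals. The product form of the Radon--Nikodym density makes this intuitive, and the cleanest verification is a short tower-property calculation exploiting the $\PP$-independence of $B^1,\dots,B^J$ together with the coordinatewise factorization of the density. Once this is in hand, the rest of the argument is a routine specialization of the earlier lemmas.
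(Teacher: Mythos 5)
Your proof is correct and follows essentially the same route as the paper: the volatility matrix from Lemma \ref{lem:Bachelier} is shown to be diagonal with diagonal entries at least $1$, the off-diagonal terms vanishing by independence of the coordinates and the diagonal covariances being non-negative by the two-independent-copies co-monotonicity argument, after which Proposition \ref{prop:compl} and Theorem \ref{thm:sufficient_condition} finish the job. If anything, your explicit factorization of $d\Ptilde/d\PP$ into coordinatewise factors, used to justify that independence survives both the change of measure and the conditioning on $\FF_t$, is more careful than the paper's proof, which invokes only the $\PP$-independence of the components of $B$ at that step.
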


\begin{proof}
 Let us observe that by the assumption the function $G$ is
  convex in each $x^j$ direction, that is its partial
  derivatives $\frac{\partial G}{\partial
    x^j}=c_j\frac{d\varphi^j(x^j)}{dx}$ are non-decreasing
  functions in each $x^j$ direction. Hence for any
  $\mathbf{x}_1 = (x_1^1,\dots,x_1^J) \in \Real^J,$ $\mathbf{x}_2 =
  (x_2^1,\dots,x_2^J) \in \Real^J$,
  and for any $1\le i,j\le J$
  \begin{align*}
    (x_1^i-x_2^i)\left(\frac{\partial G(\mathbf{x}_1)}{\partial
        x^j}-\frac{\partial G(\mathbf{x}_2)}{\partial x^j}\right) 
    = (x_1^i-x_2^i)c_j\left(\frac{d\varphi^j(x^j_1)}{dx}-\frac{d\varphi^j(x^j_2)}{dx}\right) \ge 0,
  \end{align*}
  and therefore
  \begin{equation}
    \label{eq:comonotonicity}
    \begin{split}
      c_j\left(B_T^i(\omega_1)-B_T^i(\omega_2)\right)
      \left(\frac{d\varphi^j(B_T(\omega_1))}{dx}-\frac{d\varphi^j(B_T(\omega_1))}{dx}\right)\ge0, \\
      d\Ptilde[\omega_1]\otimes d\Ptilde[\omega_2]-a.s.
    \end{split}
  \end{equation}
  Condition \eqref{eq:comonotonicity} means that the random variables $B_T^i$ and
  $c_j\frac{d\varphi^j(B_T)}{dx}$ are \emph{co-monotone}. It follows that
  \begin{equation}
    \label{eq:aplic_comonot}
     c_j\!\!\int_{\Omega_1\times\Omega_2} \negthickspace\negthickspace\negthickspace\negthickspace\negthickspace
      \left(B_T^i(\omega_1)-B_T^i(\omega_2)\right)\!\!
      \left(\frac{d\varphi^j(B_T(\omega_1))}{dx}-\frac{d\varphi^j(B_T(\omega_1))}{dx}\right)\!
      d\Ptilde[\omega_1]\times d\Ptilde[\omega_2]\ge 0.
 \end{equation}
  By Fubini's theorem \eqref{eq:aplic_comonot} is equal to
  \begin{align*}
    &c_j\int_{\Omega_1}
    \left(B_T^i(\omega_1)-\int_{\Omega_2}B_T^i(\omega_2)d\Ptilde[\omega_2]
    \right)\\
    &\qquad\qquad\times
    \left(\frac{d\varphi^j(B_T(\omega_1))}{dx}-\int_{\Omega^2}\frac{d\varphi^j(B_T(\omega_1))}{dx}d\Ptilde[\omega_2]
    \right)d\Ptilde[\omega_1]\\
    &=\Etilde\left[(B_T^i-\Etilde[B_T^i])\left(c_j\frac{d\varphi^j(B_T)}{dx} - \Etilde\left[c_j\frac{d\varphi^j(B_T)}{dx}\right]\right)\right]\\
    &=\Etilde\left[(B_T^i-\Etilde[B_T^i])\left(\frac{\partial
          G(B_T)}{\partial x^j} - \Etilde\left[\frac{\partial G(B_T)}{\partial x^j}\right]\right)\right]
    =\covtilde(B_T^i,\frac{\partial G(B_T)}{\partial x^j}).
  \end{align*}
  Hence the unconditional covariance of $B_T^i$ and $\frac{\partial G(B_T)}{\partial x^j}$ is
  non-negative. By the similar argument one can show that the conditional
  covariance of $B_T^i$ and $\frac{\partial G(B_T)}{\partial x^j}$ is
  \begin{align*}
    \covtilde\left(\left.B_T^i,\frac{\partial G(B_T)}{\partial x^j}\right|\FF_t\right)\ge 0.
  \end{align*}
We now notice that because of the independence of individual Brownian
Motions $(B^i)_{1\le i\le J}$ of the $J$-dimensional Brownian Motion $B$
  \begin{align*}
    \sigt_t^{ij} = \delta_{ij}+\gamma\covtilde\left(\left.B_T^i,\frac{\partial G(B_T)}{\partial
          x^j}\right|\FF_t\right)= 0 \quad d\mathbb{P}[\omega] \times dt\hbox{-a.s., when } i\ne j,
  \end{align*}
and  
  \begin{align*}
    \sigt_t^{ij} = \delta_{ij}+\gamma\covtilde\left(\left.B_T^i,\frac{\partial G(B_T)}{\partial
          x^j}\right|\FF_t\right)\ge 1 \quad d\mathbb{P}[\omega] \times dt\hbox{-a.s., when } i=j.
  \end{align*}
Therefore the covariance matrix $\sigt$ is a diagonal matrix with
non-zero entries on the diagonal and hence invertible $d\mathbb{P}[\omega] \times dt$-a.s. 

  Finally we deduce from Theorem \ref{thm:sufficient_condition} and
  Proposition \ref{prop:compl} that a hedging strategy $H$ exists and is
  defined uniquely $d\mathbb{P}[\omega] \times dt$-a.s.
\end{proof}

\bibliographystyle{plainnat}
\bibliography{finance}    
\end{document}